\newcommand{\myref}[1]{(\ref{#1})}
\declaretheorem[name={Example},qed={\lower-0.3ex\hbox{$\square$}} ] {Example}
\renewenvironment{proof}[1][\proofname]{%
   \par\pushQED{\qed}\normalfont%
   \topsep6\p@\@plus6\p@\relax
   \trivlist\item[\hskip\labelsep\bfseries#1\@addpunct{.}]%
   \ignorespaces
}{%
   \popQED\endtrivlist\@endpefalse
} \makeatother
 \declaretheorem[name={Theorem}  ] {Theorem}
\declaretheorem[name={Problem}  ] {Problem} 
\declaretheorem[name={Corollary}  ] {Corollary}
\declaretheorem[name={Remark}  ] {Remark} \declaretheorem[name={Proposition}  ] {Proposition}
\newcommand {\R}{\mathbb R}
\newcommand{\be}{\begin{equation}}
\newcommand{\ee}{\end{equation}}
\newcommand{\Int}{\operatorname{{\mathrm int}}}
\newcommand{\LMD}{\lambda_0,\dots,\lambda_n}
\newcommand{\LMDR}{\lambda_1,\dots,\lambda_n}
\begin{document}
\title{Optimal Translation Along a Circular mRNA
}
\author{Yoram Zarai, Alexander Ovseevich and Michael Margaliot
 \IEEEcompsocitemizethanks{
\IEEEcompsocthanksitem Y.
 Zarai is with the School of Electrical Engineering, Tel-Aviv University, Tel-Aviv 69978, Israel. 
E-mail: yoramzar@mail.tau.ac.il
\IEEEcompsocthanksitem
A. Ovseevich is with the Institute for Problems in Mechanics, Russian Academy of Sciences, pr. Vernadskogo 101, 119526 Moscow, Russia.
E-mail: ovseev@gmail.com
\IEEEcompsocthanksitem
M. Margaliot (Corresponding Author) is with the School of Electrical Engineering and the Sagol School of Neuroscience, Tel-Aviv
University, Tel-Aviv 69978, Israel.  
E-mail: michaelm@eng.tau.ac.il  
}}

\maketitle

\begin{abstract}
 The ribosome flow model on a ring~(RFMR) is a deterministic   model for translation of a circularized~mRNA. 
We derive  a new \emph{spectral representation}
 for the optimal steady-state production rate and the corresponding optimal steady-state ribosomal density 
in the~RFMR.  This representation has several important advantages.
First,  it provides a simple and
numerically stable algorithm for determining  the optimal values even in very long rings.
Second,  it  enables   efficient
computation of the sensitivity of the optimal   production rate to small changes in the transition rates along the~mRNA. 
Third, it implies that the optimal steady-state production rate is a strictly concave function of the transition rates. Thus maximizing the optimal steady-state production rate with respect to the rates, under an affine constraint on the rates becomes a convex optimization problem that admits a unique solution, which can be determined numerically using highly efficient algorithms. This optimization problem 
 is   important, for example,  when re-engineering heterologous genes in a host organism. 
We describe  the implications of our results to this and other aspects of translation. 
 \end{abstract}

\begin{IEEEkeywords}
Systems biology, mRNA translation, ribosome recycling, circular mRNA, ribosome flow model on a ring, spectral representation, Perron root, Periodic Jacobi
matrix, eigenvalue sensitivity,  convex optimization, maximizing protein production rate.
\end{IEEEkeywords}

\section{Introduction}
Gene expression is the process by which the information encoded in a
 gene is used to  synthesize a functional gene product. Two main stages of this process are
  \emph{transcription} in which
 the information   in the DNA of a specific gene is copied into a  messenger RNA~(mRNA) molecule 
and  \emph{translation}. The latter  includes three phases: 
 (1) initiation: complex macro-molecules called
ribosomes bind   to the mRNA;(2)~elongation: the ribosomes    unidirectionally decode each codon   into
the corresponding amino-acid that is delivered to the awaiting ribosome by transfer RNA (tRNA) molecules; 
and (3)~termination: the ribosome detaches from the mRNA, the amino-acid sequence is released, folded and becomes a
functional protein~\cite{Alberts2002}. The output rate of ribosomes from the mRNA, which is also the rate in which
proteins are generated, is called the protein translation rate or production rate.

Translation occures in all living organisms, and under almost all conditions, to generate the macromolecular machinery
for life. Developing a deeper understanding of translation has important implications in numerous scientific disciplines
including medicine, evolutionary biology, biotechnology, and synthetic biology. Computational  models of translation are
essential in order to better understand this complex, dynamical and tightly-regulated
 process. Such models can also aid in integrating and analyzing the rapidly increasing
 experimental findings related to translation~(see, e.g., \cite{Dana2011,TullerGB2011,Tuller2007,Chu2012,Shah2013,Deneke2013,Racle2013,Zur2016}).

Computational models of translation describe the dynamical flow of ribosomes along the mRNA molecule, and include
parameters that encode the various factors affecting the codon decoding rates and the binding of ribosomes.  Some of
these  models provide a framework for both rigorous
 analysis and Monte Carlo simulations, thus promoting a better understanding of the way the parameters, and other factors, affect the dynamical and steady-state behavior of translation.  Several  computational
 models have been suggested based on different paradigms  for example  kinetics-based
ordinary differential equations~(see, e.g.~\cite{Na2010_BMC}), Petri  nets~\cite{Brackley2012128}, and
 probabilistic Boolean networks~\cite{Zhao2014}. For more details, see the survey papers~\cite{Haar2012,Zur2016}.

A standard mathematical model for ribosome flow  is the
  \emph{totally asymmetric simple exclusion process}~(TASEP) \cite{Shaw2003,TASEP_tutorial_2011}.
 In this model,  particles   hop unidirectionally
 along an ordered lattice of~$L$ sites. Each site can be either free or occupied by a particle, and a particle can only
   hop to a free site. This  \emph{simple exclusion principle}
   models particles that have ``volume'' and thus cannot overtake one  other.
   The hops are stochastic, and the rate of hoping from site~$i$ to site~$i+1$  is denoted by~$\gamma_i$.
   TASEP has two standard configurations. In TASEP with \emph{open boundary conditions}  the two sides of the lattice are connected to two particle reservoirs, and a particle can hop to [from] the first [last] site of the lattice at a rate~$\alpha$   [$\beta$].
   The average flow through the lattice converges to a steady-state value that depends on
    the parameters~$\alpha, \gamma_1,\dots,\gamma_{L-1},\beta$.
    Analysis of TASEP with open boundary conditions is  non trivial, and closed-form results have been obtained mainly for the homogeneous~TASEP~(HTASEP), i.e.
    for the case where all the~$\gamma_i$s are assumed to be equal.
  
    In TASEP with \emph{periodic boundary conditions} the chain is closed, and a particle that hops from the last site returns to the first one. Thus, here the lattice is  a ring, and the total
    number of particles along  the ring is conserved.

  TASEP has become a
 fundamental model in non-equilibrium statistical mechanics, and has been
 applied to model numerous natural and artificial  processes such as traffic flow, communication networks, and pedestrian dynamics~\cite{TASEP_book}.
In the context of translation, the lattice  models the mRNA molecule, the particles are ribosomes,
   and simple exclusion means that a ribosome  cannot overtake a ribosome in front of it.

The \emph{ribosome flow model}~(RFM)~\cite{reuveni} is a continuous-time  deterministic 
   model for the unidirectional  flow of ``material" along an open chain of $n$ consecutive compartments (or sites).
The~RFM can be derived via  a dynamic
 mean-field approximation of~TASEP with open boundary conditions~\cite[section 4.9.7]{TASEP_book}~\cite[p. R345]{solvers_guide}.
 In a RFM with $n$ sites, the state variable $x_i(t)\in[0,1]$, $i=1,\dots,n$, describes the normalized amount of ``material" (or density) at site~$i$ at time $t$, where $x_i(t)=1$ [$x_i(t)=0$] indicates that site~$i$ is completely full [completely empty] at time~$t$. In the~RFM, the two sides of the chain are connected to two particle reservoirs.
A   parameter~$\lambda_i>0$, $i=0,\dots,n$, controls the transition rate from site~$i$ to site~$i+1$,
 where~$\lambda_0$ [$\lambda_n$] is
the initiation [exit] rate. 

In the  \emph{ribosome flow model on a ring} (RFMR)~\cite{rfmr}
 the particles exiting the last site reenter the first site. This
 is a dynamic mean-field approximation of~TASEP with periodic boundary conditions.
The~RFMR admits a first integral, i.e. a quantity that is preserved along the dynamics, as the total ribosomal
 density  is conserved.
 Both the~RFM and~RFMR are cooperative dynamical systems~\cite{hlsmith}, but
their dynamical properties  are quite different~\cite{rfmr}.

Through simultaneous interactions with the cap-binding protein eIF4E and the poly(A)-binding protein~PABP, the
eukaryotic initiation factor~eIF4G  is able to bridge the two ends of the mRNA~\cite{Wells1988,presii2003}.  This
suggests that
 a  large fraction of the ribosomes that complete translating the mRNA  re-initiate.
The~RFMR is a good approximation of the translation dynamics in these circularized mRNAs.
In addition, circular RNA forms (which includes covalent RNA interactions) appear in all domains of
life~\cite{Danan2012,Cocquerelle1993,Cell1993,Burd2010,Hensgens1983,Abe2015,Granados-Riveron2016,AbouHaidar2014}, and it
was recently suggested that circular RNAs can be translated in
eukaryotes~\cite{Abe2015,Granados-Riveron2016,AbouHaidar2014}.

It was shown in~\cite{rfmr} that the~RFMR admits a unique steady-state  that depends on the total initial density and the
transition rates, but not on the distribution of the total initial density among the sites. All trajectories emanating
from initial conditions with the same total density converge to the unique steady-state. Ref.~\cite{RFM_r_max_density}
considered
 the ribosomal density along a circular mRNA that \emph{maximizes}
 the steady-state production rate using the~RFMR. It was shown that given any arbitrary set of positive transition
rates, there   exists a \emph{unique} optimal density (the same is true  for~TASEP with periodic boundary condition~\cite{Marshall2014}). However, this unique optimum was not given
explicitly, other than under certain special symmetry conditions on the rates, where the optimal density is one half of
the maximal possible density.

The ribosomal density along the mRNA molecule plays a critical role  in regulating gene expression, and specifically in determining protein production rates~\cite{picard2013significance,Benet2017}.
 For example, it was suggested in~\cite{Benet2017} that the cell tightly regulates ribosomal densities in order to maintain protein concentrations at different growth temperatures. At higher temperatures, the ribosomal density along the mRNA ``improves" in order to increase protein production rates (as protein stability decreases with temperature). 

The ribosomal density also affects different fundamental intracellular phenomena. Traffic james, abortions, and collisions may form if the ribosomal density is very high~\cite{Subramaniam2014}. It may also contribute to co-translational misfolding of proteins, which then requires additional  resources in order to degrade the  degenerated proteins~\cite{Drummond2008,Kurland1992,Zhang2009}.  On the other hand, a very low ribosomal density may lead to high degradation rate of mRNA molecules~\cite{Kimchi-Sarfaty2013,Edri2014,Tuller2015,Proshkin2010}. Thus, analyzing the ribosomal density that maximizes the production rate is critical in understanding how cells evolved to adapt and thrive in  a changing environment.

Here we derive  a new \emph{spectral representation}~(SR)
 for  the optimal steady-state production rate and the
corresponding steady-state ribosomal density in the~RFMR. This~SR  has several important
advantages. First, it provides a simple and
numerically stable way  to compute the optimal values even in very long rings. Second, 
 it  enables   efficient
computation of the sensitivity of the optimal steady-state production rate   to small changes in the transition rates.
This sensitivity analysis may find important applications in synthetic biology  where a crucial problem is
to determine the codons that are the most ``important'' in terms of their effect on the production rate.
Third, the~SR   implies that the optimal steady-state production rate is a strictly concave function of the~RFMR rates. Thus, the problem of maximizing the optimal steady-state production rate \emph{with respect to the rates} becomes a convex optimization problem that admits a unique solution, which can be determined numerically using highly efficient algorithms.

The remainder of this paper  is organized as follows. The next two sub-sections briefly review  the~RFM and the~RFMR.
Section~\ref{sec:main} describes our main results and  their biological implications. Section~\ref{sec:disscus} concludes and 
suggests  several directions for further research. To increase the readability of this paper, the proofs of all the
results are placed in the Appendix.

\subsection{Ribosome Flow Model (RFM)}
In a RFM with $n$ sites, the state variable $x_i(t)\in[0,1]$, $i=1,\dots,n$, denotes the density at site~$i$ at time~$t$,
where~$x_i(t)=1$ [$x_i(t)=0$] means that site~$i$ is completely full [empty] at time~$t$.
The~$n+1$ parameters~$\lambda_i>0$, $i=0,\dots,n$, control  the transition rate from site $i$ to site $i+1$. The RFM is a set of~$n$
first-order nonlinear ordinary differential equations describing the change in the amount of ``material'' in each site:
\be\label{eq:rfm_all} \dot{x}_i(t)=\lambda_{i-1}x_{i-1}(t)(1-x_i(t))-\lambda_i x_i(t) (1-x_{i+1}(t)),\quad i=1,\dots,n,
\ee 
where~$x_0(t):=1$, $x_{n+1}(t):=0$, and $\dot{x}_i$ is the change in the amount of material at site $i$ at time $t$, i.e.
$\dot{x}_i(t):=\frac{d}{dt}x_i(t)$, $i=1,\dots,n$.
Eq.~\eqref{eq:rfm_all} can be explained as a kind of a master equation: the change in density in site~$i$
is the flow from site~$i-1$ to site~$i$ minus the flow from site~$i$ to site~$i+1$.
 The first flow, that is, the input rate to site $i$ is~$\lambda_{i-1} x_{i-1}(t)(1 - x_{i}(t)
)$. This rate is proportional to $x_{i-1}(t)$, i.e. it increases
    with the density at site~$i-1$, and to $(1-x_{i}(t))$, i.e. it decreases as site~$i$ becomes fuller.
In particular, when site $i$ is completely full, i.e. when~$x_{i}(t)=1$, there is no flow into this site.
 This is reminiscent of the  simple exclusion principle: the flow of
particles into a site decreases as that site becomes fuller. Note that the maximal possible
  flow  from site~$i-1$ to site~$i$  is~$\lambda_{i-1}$. Similarly, the output rate from site~$i$, which is also the input rate to site~$i+1$, is given by $\lambda_{i} x_{i}(t)(1 - x_{i+1}(t) )$.
The output rate from the chain is  $R(t):=\lambda_n x_n(t)$, that is, the flow out of the last site. 

In the context of mRNA translation, the $n$-sites chain is   the mRNA, $x_i(t)$ describes the ribosomal density at site~$i$ at time~$t$, and~$R(t)$ describes the rate at which  ribosomes leave the mRNA, which is also the rate at which
the  proteins are generated. Thus,~$R(t)$ is   the protein translation rate or production rate at time~$t$.

Since every state-variable models the density of ribosomes in a site,   normalized such that a value zero [one]
 corresponds to a completely empty [full] site, the state space of the~RFM is the~$n$-dimensional unit cube $C^n:=[0,1]^n$.
Let~$x(t,a)$ denote the solution of the RFM at time~$t$ for the initial condition~$x(0)=a$. It has been shown
in~\cite{RFM_stability}  (see also~\cite{RFM_entrain}) that  for every~$a \in C^n$ this solution  remains in~$C^n$ for
all~$t\geq 0$, and that the~RFM admits a globally asymptotically stable steady-state $e\in\Int(C^n)$, i.e.
$\lim_{t\to \infty} x(t,a)=e$ for all~$a \in C^n$. The value~$e$ depends on the rates $\LMD$, but not on the initial
condition~$x(0)=a$.
  This means that if we simulate the~RFM starting from any initial density
 of ribosomes on the mRNA the dynamics will
 always converge to the same steady-state (i.e., to the same final ribosome density along
 the mRNA).
In particular,  the production rate~$R(t)=\lambda_n x_n(t)$ always converges to the steady-state value: \be
\label{eq:defr} R:=\lambda_n  {e}_n. \ee

A spectral representation of this steady-state value has been derived in~\cite{rfm_max}. Given a~RFM with dimension~$n$
and rates~$\lambda_0,\dots,\lambda_n$, define a $(n+2)\times(n+2)$ Jacobi matrix
\be\label{eq:bmatb}
                B(\LMD) := \begin{bmatrix}
 0 &  \lambda_0^{-1/2}   & 0 &0 & \dots &0&0 \\
\lambda_0^{-1/2} & 0  & \lambda_1^{-1/2}   & 0  & \dots &0&0 \\
 0& \lambda_1^{-1/2} & 0 &  \lambda_2^{-1/2}    & \dots &0&0 \\
 & &&\vdots \\
 0& 0 & 0 & \dots &\lambda_{n-1}^{-1/2}  & 0& \lambda_{n }^{-1/2}     \\
 0& 0 & 0 & \dots &0 & \lambda_{n }^{-1/2}  & 0
 \end{bmatrix}.
\ee Note that~$B$ is componentwise non-negative and irreducible, so it admits a Perron root~$\mu>0$. It has been shown
in~\cite{rfm_max} that~$\mu=R^{-1/2}$. This provides a way to compute the steady-state~$R$ in the RFM without simulating the
dynamical equations of the~RFM.

For more on the analysis of the~RFM  using tools from systems and control theory and the biological implications of this
analysis, see~\cite{zarai_infi,rfm_max,RFM_sense,RFM_feedback,rfm_control,rfm_opt_down}. Recently, a network of~RFMs,
interconnected via a pool of ``free'' ribosomes,
 has been used to model and analyze competition for ribosomes in the cell~\cite{RFM_model_compete_J}.

\begin{figure*}[t]
 \begin{center}
  \includegraphics[scale=0.8]{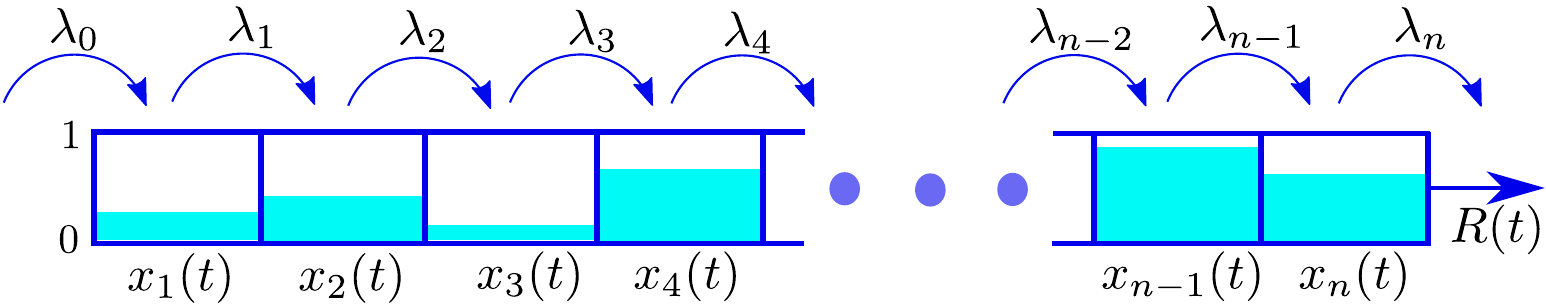}
\caption{The RFM models unidirectional flow along a chain of~$n$ sites. The state variable~$x_i(t)\in[0,1]$ represents
 the density at site $i$ at time $t$. The parameter $\lambda_i>0$ controls the transition rate from  site~$i$ to site~$i+1$, with~$\lambda_0>0$ [$\lambda_n>0$] controlling    the initiation [exit] rate. The output rate at time $t$ is~$R(t) :=\lambda_n x_n(t)$. }\label{fig:rfm}
\end{center}
\end{figure*}

\subsection{Ribosome Flow Model on a Ring (RFMR)}
If we consider the RFM under  the additional assumption that all the ribosomes leaving site~$n$ circulate back to
site~$1$ then we obtain the~RFMR (see Fig.~\ref{fig:rfmr}). Just like the RFM, the RFMR is described by $n$ nonlinear,
first-order ordinary differential equations:
\begin{align}\label{eq:rfmr}
                    \dot{x}_1&=\lambda_n x_n (1-x_1) -\lambda_1 x_1(1-x_2), \nonumber \\
                    \dot{x}_2&=\lambda_{1} x_{1} (1-x_{2}) -\lambda_{2} x_{2} (1-x_3) , \nonumber \\
                             &\vdots \nonumber \\
                    \dot{x}_n&=\lambda_{n-1}x_{n-1} (1-x_n) -\lambda_n x_n (1-x_1) .
\end{align}
The    difference here with respect  to the~RFM is in the equations describing the change of material in sites~$1$
and~$n$. Specifically, the flow out of site~$n$ is the flow into site~$1$. 
This model assumes  perfect recycling (be it covalent or non-covalent),  and    provides a
 good approximation   when a large fraction of the ribosomes are recycled.
 Note that the RFMR can  also be written succinctly  as~\eqref{eq:rfm_all}, but now  with every index   interpreted modulo~$n$. In particular, $\lambda_0$
[$x_0$] is replaced by~$\lambda_n$ [$x_n$].

\begin{figure*}[t]
 \begin{center}
  \includegraphics[scale=0.8]{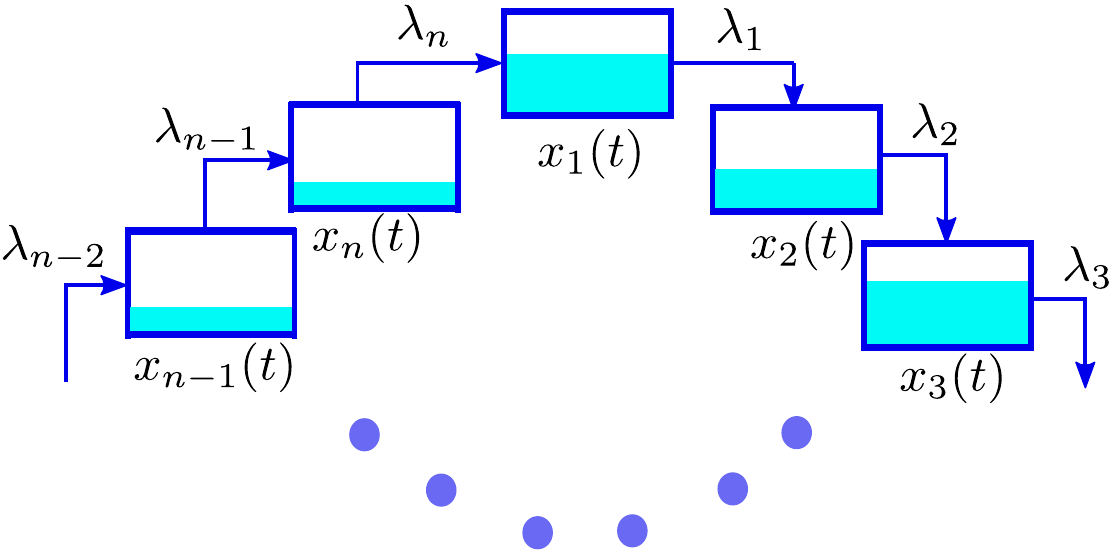}
\caption{The RFMR models unidirectional flow along a circular chain of~$n$ sites. The parameter $\lambda_i>0$ controls
the transition rate from site $i$ to site $i+1$.   }\label{fig:rfmr}
\end{center}
\end{figure*}

\begin{Remark}\label{remk:cir_rfmr}
It is clear from the cyclic topology of  the~RFMR that if we cyclically shift all the rates~$k$  times for some integer~$k>1$
 then the model
does not change.
\end{Remark}

In the RFMR the total density of ribosomes along the  ring   at time~$t$ is given by
\[
H(x(t)):=x_1(t)+\dots+x_n(t),
\]
i.e. the sum of the density at each site. Let~$s$ denote  the total density of ribosomes along the  chain at time~$t=0$,
i.e. $s:=H(x(0))$. Since ribosomes that exit site $n$ circulate back to site $1$, the total density  is preserved for all
time, that is,~$H(x(t))\equiv s$ for all~$t\geq 0$.
  The dynamics of the RFMR thus redistributes the particles between the sites,
but without changing  the total ribosome density.
 In the context of translation, this means that
the total number of ribosomes on the  (circular) mRNA is conserved. We say  that~$H(x(t))$ is a \emph{first integral} of
the~RFMR.

For~$s\in[0,n]$, the \emph{$s$ level set} of~$H$ is
\[
L_s:=\{  y \in C^n: y_1+\cdots+y_n= s   \}.
\]
This is the  set of all possible ribosome density configurations such that the total density is equal to~$s$. For
example, the vectors of densities~$\begin{bmatrix} s&0  &0 &\dots&0 \end{bmatrix}'$ and~$\begin{bmatrix} s/2&s/2  &0
&\dots&0 \end{bmatrix}'$ both belong to~$L_s$.

Ref.~\cite{rfmr} has shown that the~RFMR is a strongly cooperative
 dynamical system, and that this
implies that every level set~$L_s$  contains a unique steady-state~$e=e(s,\lambda_1,\dots,\lambda_n)\in\Int(C^n)$,
and that any trajectory of the~RFMR  emanating  from any~$x(0)\in L_s$ converges to this steady-state point. In
particular, the production rate converges to a steady-state value $R=R(s,\LMDR)$.

 Pick~$s\in[0,n]$ and~$a\in L_s$.
Consider the~RFMR with~$x(0)=a$. Let
\[
\rho:=s/n
\]
 denote the average ribosomal density in the RFMR.
At steady-state, i.e. for $x=e$ the left-hand side of all the equations in~\eqref{eq:rfmr} is zero, so
\begin{align}\label{eq:rfmr_e}
                     \lambda_n e_n (1-e_1) &=\lambda_1 e_1(1-e_2), \nonumber \\
                                           &=\lambda_{2} e_{2} (1-e_3) , \nonumber \\
                             &\vdots \nonumber \\
                     &=\lambda_{n-1}e_{n-1} (1-e_n), \nonumber \\
                     &=R,
\end{align}
and, since the total density is conserved, 
\[
e_1+\dots+e_n = s.
\]

Note that it follows from~\eqref{eq:rfmr_e} that for any $c>0$
\be\label{eq:hom_deg1}
R(s,c\lambda_1,\dots,c\lambda_n)=cR(s,\LMDR),
\ee
i.e. if we multiply all the rates by a factor $c>0$ then the steady-state production rate will also increase by the same factor $c$.

Given a set of transition rates, an interesting question is what ribosomal density \emph{maximizes}
 the steady-state production
rate in the~RFMR? Indeed, a   ribosomal density~$s=0$  means zero production rate (as there are no ribosomes on the ring), and so
does the  completely full density~$s=n$, as all the sites are completely full and the ribosomes cannot move forward. 
It was shown in~\cite{RFM_r_max_density} that for any arbitrary positive set of rates~$\LMDR$,
 there exists a \emph{unique}   density~$s^*=s^*(\LMDR)$ (and thus a unique average density~$\rho^*=s^*/n$) that maximizes the steady-state production rate. We denote the corresponding optimal
 steady-state production rate by $R^*=R(s^*(\LMDR),\LMDR)$, and the corresponding optimal
steady-state density by $e^*=e(s^*(\LMDR),\LMDR)$. This means that in order to maximize the steady-state production rate (with respect to the total density), the
mRNA must be initialized with a total density~$s^*$ (the  distribution of this total density  along the mRNA at time zero
is not important). Initializing with either more or less than $s^*$  (i.e with~$\sum_{i=1}^n x_i(0)>s^*$ or with~$\sum_{i=1}^n x_i(0)<s^*$ ) will  decrease the   steady-state production rate
with respect to the one obtained when the circular mRNA is initialized with total density~$\sum_{i=1}^n x_i(0)=s^*$.

The results in~\cite{RFM_r_max_density} also show that for the optimal value~$s^*$, the steady-state density satisfies:
\be\label{eq:swee(1-e)} \prod_{i=1}^n e^*_i=\prod_{i=1}^n (1-e^*_i). \ee This can be explained as follows. If the total
density~$s$ is too small then the product~$\prod_{i=1}^n e_i$ is also small and thus~$ \prod_{i=1}^n e_i< \prod_{i=1}^n
(1-e_i)$. This case is not optimal i.e. it does not maximizes~$R$, as there are not enough ribosomes on the ring.  If the
total density~$s$ is too
 large then a similar argument yields~$ \prod_{i=1}^n e_i> \prod_{i=1}^n (1-e_i)$.
 This case is also not optimal,
as there are too many ribosomes on the ring and this leads to ``traffic jams'' that reduce the production rate. The
optimal scenario  lies between these two cases and is characterized by~\eqref{eq:swee(1-e)}.

\begin{Example}\label{exa:s3}
  Fig.~\ref{fig:R_s_n3} depicts $R$ as a function of $s$ for a RFMR with dimension $n=3$ and rates $\lambda_1=0.7$, $\lambda_2=1.6$, and $\lambda_3=2.2$. It may be seen
    that there exists a unique value~$ s^*=1.4948$ (all numerical results in this paper are to four digit accuracy)
    that maximizes~$R$.
Simulating the~RFMR with this initial density (e.g., by
 setting~$x(0)=\begin{bmatrix}s^*/2   & 0   & s^*/2 \end{bmatrix}'$) yields
    \[
    e^*=\begin{bmatrix} 0.6878   & 0.3546   & 0.4524 \end{bmatrix}',
    \]
    and~$R^*=\lambda_3 e^*_3(1-e^*_1)= 0.3107$.
     Note that $s^*$ is   close (but not equal) to~$3/2$, that is,  one half of the maximal density.
    Note also that~$\prod_{i=1}^3 e^*_i=\prod_{i=1}^3 (1-e^*_i)=0.1103 $.
\end{Example}

\begin{figure}[t]
 \begin{center}
  \includegraphics[scale=0.5]{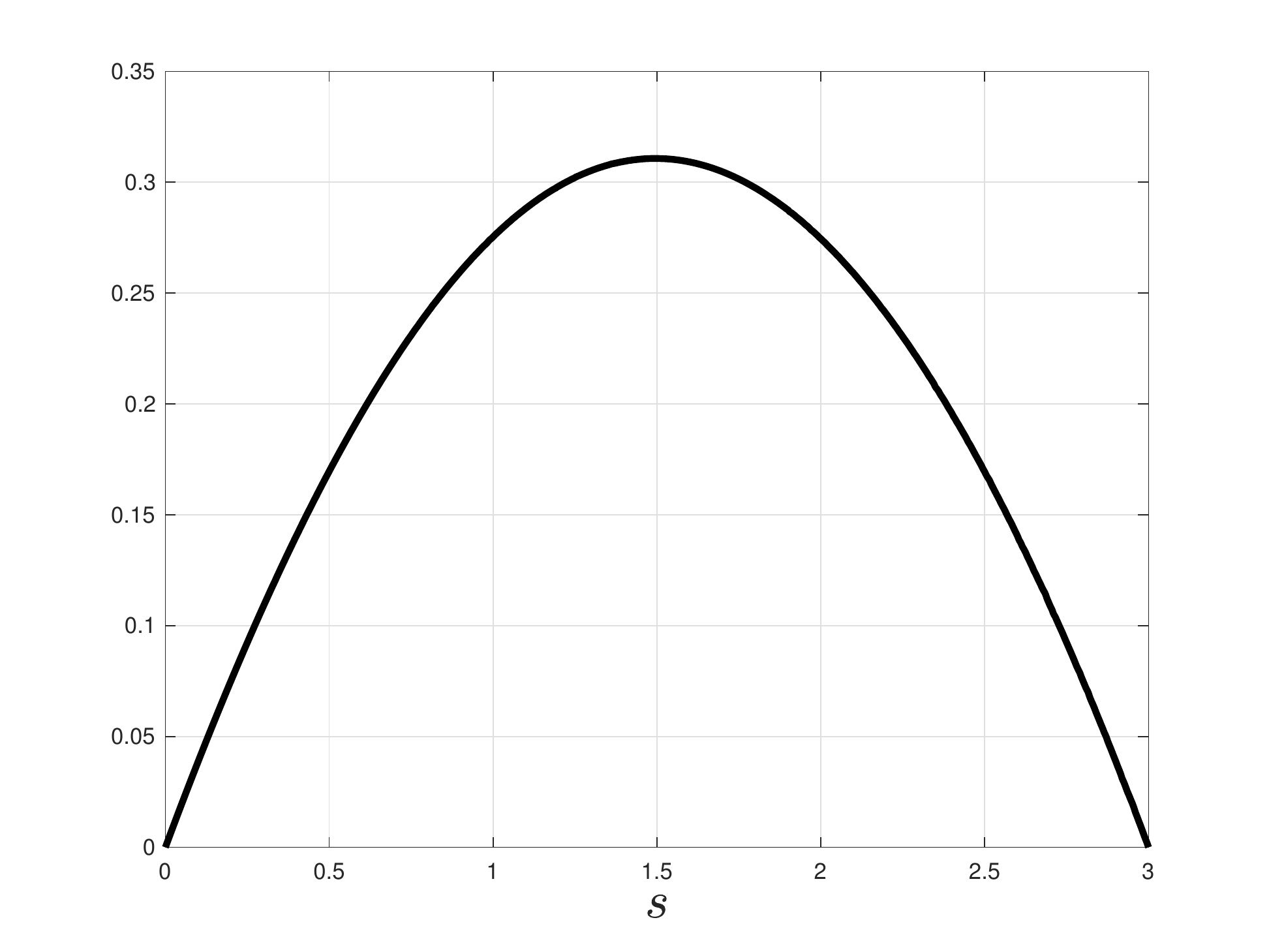}
\caption{Steady-state production rate~$R$ as a function of the total (conserved) ribosomal density~$s\in[0,3]$, for a RFMR with
dimension $n=3$ and transition  rates $\lambda_1=0.7$, $\lambda_2=1.6$, and $\lambda_3=2.2$.}\label{fig:R_s_n3}
\end{center}
\end{figure}

Here, we present for the first time a \emph{spectral representation}
 of the optimal steady-state production rate~$R^*$ and  the steady-state density~$e^*$ in the
 non-homogeneous~RFMR. We show that this representation
 has several advantages. First, it provides an efficient and numerically stable algorithm for evaluating $R^*$ and $e^*$ (and thus $s^*$)
even for very large rings.  This completely
 eliminates the need to   simulate the~RFMR dynamical equations for different values
of~$s$ in order to determine the optimal
 values.
 Furthermore, the spectral representation allows to analyze the sensitivity of~$R^*$ to small changes in the rates. This sensitivity analysis could be crucial for example in synthetic biology applications, where an important problem is to determine positions along the transcript that affect the production rate the most (as this is not necessarily the positions of the slowest codons)~\cite{gene_cloning_book}.
Finally, we show that the spectral representation implies that $R^*$ is a strictly concave function of the rates. This means that the problem of maximizing~$R^*$ \emph{with respect to the rates} is a convex optimization problem, thus it admits a unique solution that can be efficiently determined numerically using algorithms that scale well with $n$. 

It is important to note that in general the analysis results for the~RFMR  hold for any set of transition
rates. This is in contrast to the analysis results for TASEP. Rigorous analysis of TASEP seems to be tractable
only under the assumption that the internal hopping rates are all equal (i.e. the homogeneous case).
In the context of translation, this models the very special case where all elongation rates are assumed to be equal.

The next
 section derives a spectral representation for~$e^*$ and~$R^*$, and describes its  implications.

\section{Main  Results}\label{sec:main}

\subsection{Spectral Representation}

Consider a RFMR with dimension $n>2$ and rates $\LMDR>0$. Define an $n\times n$ matrix  
\be\label{eq:matA}
A(\LMDR) := \begin{bmatrix} 
 0 &  \lambda_1^{-1/2}   & 0 &0 & \dots &0&\lambda_n^{-1/2} \\
\lambda_1^{-1/2} & 0  & \lambda_2^{-1/2}   & 0  & \dots &0&0 \\
 0& \lambda_2^{-1/2} & 0 &  \lambda_3^{-1/2}    & \dots &0&0 \\
 & &&\vdots \\
 0& 0 & 0 & \dots &\lambda_{n-2}^{-1/2}  & 0& \lambda_{n-1 }^{-1/2}     \\
 \lambda_n^{-1/2}& 0 & 0 & \dots &0 & \lambda_{n-1 }^{-1/2}  & 0
 \end{bmatrix}.
\ee Note that this is a periodic Jacobi matrix (see, e.g.~\cite{pjm_1980}).

We use the notation~$\R^n_{++}:=\{v\in\R^n : v_i>0, i=1,\dots,n\}$, that is, the set of all~$n$-dimensional vectors
with  positive entries. Since~$A$ is symmetric, all its eigenvalues are real. Since~$A$ is (componentwise) non-negative  and irreducible,   it admits a unique maximal eigenvalue~$\sigma>0$ (called the Perron eigenvalue or Perron root),
 and a corresponding eigenvector~$\zeta\in\R^n_{++}$ (the Perron eigenvector)~\cite{matrx_ana}.

Our first result provides a  representation for the optimal steady-states in the~RFMR using the spectral properties of
the matrix~$A$. In what follows, all indexes are interpreted modulo~$n$. Recall that all the steady-state properties are invariant to any
arbitrary cyclic shifts of the rates (see Remark~\ref{remk:cir_rfmr}), and that the proofs of all the results
are placed in the Appendix.

\begin{Theorem}\label{thm:rfmr_lin}
Consider a RFMR with dimension $n$ and rates $\LMDR$. Let~$\sigma>0$ $[\zeta \in \R^n_{++} ]$  denote the Perron
eigenvalue [eigenvector]
 of~$A$ in~\eqref{eq:matA}.
 Then the optimal values in the~RFMR satisfy:
\begin{align}\label{eq:spect_rep}
R^*&=\sigma^{-2}, \nonumber \\
e_i^*&=\lambda_i^{-1/2}\sigma^{-1}\frac{\zeta_{i+1}}{\zeta_i}, \quad i=1,\dots,n, \nonumber \\
s^*&= \sigma^{-1}\sum_{i=1}^n  \lambda_i^{-1/2} \frac{\zeta_{i+1}}{\zeta_i}.
\end{align}
\end{Theorem}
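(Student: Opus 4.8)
The plan is to build the Perron eigenpair of $A$ explicitly out of the (already known) optimal steady state $e^*$ of the RFMR, and then simply read off the three identities. Recall from \cite{rfmr,RFM_r_max_density} that the optimal steady state $e^*\in\Int(C^n)$ exists, is unique, satisfies the chain of equalities in \eqref{eq:rfmr_e} with common value $R^*>0$ — i.e. $R^*=\lambda_i e_i^*(1-e_{i+1}^*)$ for every $i$, with all indices read modulo $n$ as in \eqref{eq:rfmr} — and obeys the optimality identity \eqref{eq:swee(1-e)}, $\prod_{i=1}^n e_i^*=\prod_{i=1}^n(1-e_i^*)$. Set $\sigma:=(R^*)^{-1/2}>0$ and define a sequence by $\zeta_1:=1$ and the forward recursion $\zeta_{i+1}:=\lambda_i^{1/2}\,\sigma\, e_i^*\,\zeta_i$ for $i=1,\dots,n$; since $e_i^*>0$ and $\sigma,\lambda_i>0$, all the $\zeta_i$ are positive.

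The first, and really the only nonroutine, step is to check that this recursion is compatible with the cyclic indexing, i.e. that running it once around the ring returns $\zeta_1$; equivalently $\prod_{i=1}^n \lambda_i^{1/2}\sigma e_i^*=1$, i.e. $\prod_{i=1}^n e_i^*=(R^*)^{n/2}\prod_{i=1}^n\lambda_i^{-1/2}$. To see this, rewrite the steady-state relations as $1-e_{i+1}^*=R^*/(\lambda_i e_i^*)$ and take the product over $i$; since $i\mapsto i+1$ permutes $\{1,\dots,n\}$, the left-hand side equals $\prod_i(1-e_i^*)$, so $\prod_i(1-e_i^*)=(R^*)^n\big/\big(\prod_i\lambda_i\,\prod_i e_i^*\big)$. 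Combining this with the optimality identity $\prod_i e_i^*=\prod_i(1-e_i^*)$ gives $\big(\prod_i e_i^*\big)^2=(R^*)^n/\prod_i\lambda_i$, which is exactly the required consistency condition (taking positive square roots). Hence $\zeta:=(\zeta_1,\dots,\zeta_n)\in\R^n_{++}$ is well defined as a cyclic sequence. It is precisely here that the characterization \eqref{eq:swee(1-e)} of the \emph{optimal} density enters: for a non-optimal steady state the same recursion would only produce a Floquet-type quasi-periodic sequence, not a genuine eigenvector of the periodic Jacobi matrix.

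The second step is to verify $A\zeta=\sigma\zeta$. The $i$-th component of this identity reads $\lambda_{i-1}^{-1/2}\zeta_{i-1}+\lambda_i^{-1/2}\zeta_{i+1}=\sigma\zeta_i$ (indices mod $n$, matching \eqref{eq:matA}). From the recursion, $\lambda_i^{-1/2}\zeta_{i+1}=\sigma e_i^*\zeta_i$; and solving the recursion at index $i-1$ for $\zeta_{i-1}$ and using $R^*=\lambda_{i-1}e_{i-1}^*(1-e_i^*)$ together with $R^*=\sigma^{-2}$ gives $\lambda_{i-1}^{-1/2}\zeta_{i-1}=\sigma(1-e_i^*)\zeta_i$; adding the two yields $\sigma\zeta_i$. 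Since $A$ is componentwise nonnegative and irreducible, Perron--Frobenius theory \cite{matrx_ana} guarantees that the only eigenvalue admitting a positive eigenvector is the spectral radius; hence $\sigma$ is the Perron root of $A$ and $\zeta$ its Perron eigenvector. This gives $R^*=\sigma^{-2}$ at once, while the recursion read as $e_i^*=\lambda_i^{-1/2}\sigma^{-1}\zeta_{i+1}/\zeta_i$ gives the second identity, and summing it over $i$ gives $s^*=\sum_i e_i^*=\sigma^{-1}\sum_i\lambda_i^{-1/2}\zeta_{i+1}/\zeta_i$, completing the argument. As a consistency check one can run this in reverse: starting from the Perron data, set $\hat e_i:=\lambda_i^{-1/2}\sigma^{-1}\zeta_{i+1}/\zeta_i$, use the eigenvalue equation to get $1-\hat e_{i+1}=\lambda_i^{-1/2}\zeta_i/(\sigma\zeta_{i+1})\in(0,1)$ and $\lambda_i\hat e_i(1-\hat e_{i+1})=\sigma^{-2}$ for all $i$, so $\hat e$ is an interior steady state of \eqref{eq:rfmr}; telescoping then gives $\prod_i\hat e_i=\prod_i(1-\hat e_i)$, and uniqueness of the optimal steady state forces $\hat e=e^*$. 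This is also compatible with Remark~\ref{remk:cir_rfmr}, since a cyclic shift of the rates induces the same cyclic shift of $A$ and of $\zeta$.
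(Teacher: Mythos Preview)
Your argument is correct. It runs in the reverse direction from the paper's own proof: the paper starts from the Perron data $(\sigma,\zeta)$ of $A$, defines $d_i:=c_i\zeta_{i+1}/(\sigma\zeta_i)$ with $c_i=\lambda_i^{-1/2}$, and then verifies that the $d_i$ satisfy both the RFMR steady-state relations~\eqref{eq:rfmr_e} with common value $\sigma^{-2}$ and the optimality identity~\eqref{eq:swee(1-e)}, concluding by uniqueness of the optimal steady state. You instead start from the known $e^*$, build $\zeta$ by the recursion $\zeta_{i+1}=\lambda_i^{1/2}\sigma e_i^*\zeta_i$, and use the optimality identity to close the recursion around the ring; Perron--Frobenius (positive eigenvector $\Rightarrow$ Perron root) then pins down $\sigma$. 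The two arguments use exactly the same ingredients and are inverses of one another---indeed, your final ``consistency check'' paragraph is precisely the paper's proof. What your direction buys is a transparent explanation of \emph{why} the spectral picture singles out the optimal density: the cyclic closure of the recursion fails unless $\prod_i e_i=\prod_i(1-e_i)$, so non-optimal steady states yield only quasi-periodic (Floquet) sequences rather than genuine eigenvectors of the periodic Jacobi matrix. The paper's direction, on the other hand, makes it slightly more explicit that $e_i^*\in(0,1)$ comes for free from the eigenvalue equation.
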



\begin{Example}\label{example2}
Consider a RFMR with dimension $n=3$ and rates $\lambda_1=0.7$, $\lambda_2=1.6$, and $\lambda_3=2.2$. The corresponding
matrix $A$ is:
\[
A=\begin{bmatrix}
 0 & 1.1952 & 0.6742 \\
 1.1952 & 0 & 0.7906 \\
 0.6742 & 0.7906 & 0
\end{bmatrix}.
\]
The maximal eigenvalue of $A$ is $\sigma=1.7940$, and the corresponding eigenvector is $$\zeta=\begin{bmatrix} 0.6024 &
0.6219 & 0.5004 \end{bmatrix}'.$$ Now~\eqref{eq:spect_rep} yields $R^*=1.7940^{-2}=0.3107$, $e_1^*=0.6878$,
$e_2^*=0.3546$, $e_3^*=0.4524$, and~$s^*=1.4948$. This agrees of course with the results in Example~\ref{exa:s3}.
\end{Example}

Thm.~\ref{thm:rfmr_lin} thus provides a spectral
 representation of the optimal values~$R^*$, $e^*$, and~$s^*$.  This is important, since $e^*$ cannot be easily calculated
 based on the steady-state equations of the RFMR. Using simple and efficient algorithms to determine the eigenvalues and eigenvectors of a periodic Jacobi matrix, it is now possible to numerically compute $R^*$, $e^*_i$, $i=1,\dots,n$, and   $s^*$ even for very large rings and without any simulations of the dynamical equations of the RFMR.

Thm.~\ref{thm:rfmr_lin} has several more  interesting implications. 
Given a~RFMR with rates~$\lambda_1,\dots,\lambda_n$, define a vector~$\bar \lambda\in\R^n_{++}$ by $\bar
\lambda_1:=\lambda_2$, $\bar \lambda_2:=\lambda_3,\dots,\bar \lambda_n=\lambda_1$. In other words,~$\bar \lambda$ is a 1-step cyclic shift
of~$\lambda$. Let $P\in\R^{n\times n}$ be a matrix of zeros, except for the super-diagonal and the~$(n,1)$ entry that are
all equal to $1$. For example, for $n=4$, $
P=\begin{bmatrix} 0 & 1 & 0 & 0 \\
0 & 0  & 1 & 0 \\
0 & 0 & 0 & 1 \\
1 & 0 & 0 & 0
\end{bmatrix}.
$ Then~$P$ is a permutation matrix so that~$P'=P^{-1}$, and~$\bar  \lambda=P\lambda$. It is straightforward to show
that~$A(\bar \lambda)= PA(\lambda)P'$, so~$A(\lambda)$
 and~$A(\bar \lambda)$ have the same spectral properties. Thus, Thm.~\ref{thm:rfmr_lin}
leads to the same steady-state results for both the original RFMR and its cyclic shift and this agrees with
 Remark~\ref{remk:cir_rfmr}.

In some special cases, the Perron eigenvalue and eigenvector of~$A$ may be known explicitly
 and then one
can immediately determine the optimal steady-state in the corresponding~RFMR. The next example demonstrates this.
\begin{Example}\label{exa:homg}
Consider a RFMR with homogeneous transition rates, i.e. \be\label{eq:homog_rates}
\lambda_1=\cdots=\lambda_n:=\lambda_c, \ee where $\lambda_c$ denotes the common value of all the rates. Then it is
straightforward to verify that $A(\lambda_c,\cdots,\lambda_c)$  admits a Perron eigenvalue $\sigma=2 \lambda_c^{-1/2}$
and a corresponding   eigenvector $\zeta=\begin{bmatrix} 1 & 1 & \cdots & 1 \end{bmatrix}'$.  Thm.~\ref{thm:rfmr_lin}
implies that~$R^*=\lambda_c/4$ and $e_i^*=1/2$, $i=1,\dots,n$. This result has already been proven
in~\cite[Prop.~3]{RFM_r_max_density} using a different approach.
\end{Example}

\subsection{Steady-State  RFM as a Special Case of the Steady-State RFMR with Optimal Total Density}
Comparing the spectral representations for the~RFMR and the~RFM yields the following result. 
Consider a~RFMR with dimension~$n$, fixed rates~$\lambda_1,\dots,\lambda_{n-1}$, and~$\lambda_n\to \infty$. In this case,
the matrix~$A(\lambda)$ in~\eqref{eq:matA} converges to the matrix: \be\label{eq:matconv}
  \begin{bmatrix}
 0 &  \lambda_1^{-1/2}   & 0 &0 & \dots &0&0 \\
\lambda_1^{-1/2} & 0  & \lambda_2^{-1/2}   & 0  & \dots &0&0 \\
 0& \lambda_2^{-1/2} & 0 &  \lambda_3^{-1/2}    & \dots &0&0 \\
 & &&\vdots \\
 0& 0 & 0 & \dots &\lambda_{n-2}^{-1/2}  & 0& \lambda_{n-1 }^{-1/2}     \\
  0& 0 & 0 & \dots &0 & \lambda_{n-1 }^{-1/2}  & 0
 \end{bmatrix}.
\ee Comparing this with~\eqref{eq:bmatb} and using Thm.~\ref{thm:rfmr_lin}
 implies the following.
\begin{Corollary}
Let~$e^*=\begin{bmatrix}  e^*_1&\dots&e^*_n \end{bmatrix}'$ denote the   optimal steady-state of a~RFMR  with dimension~$n$ and rates~$\LMDR$, where $\lambda_n\to\infty$. 
Let~$\tilde e=\begin{bmatrix} \tilde e_1&\dots&\tilde e_{n-2} \end{bmatrix}'$ denote the steady-state
of a~RFM with dimension~$n-2$ and transition
 rates $\tilde \lambda_0=\lambda_1,\tilde \lambda_1=\lambda_2,\dots,\tilde \lambda_{n-2}=\lambda_{n-1}$. 
Then~$\tilde e=\begin{bmatrix}
e_2^*&e_3^*&\dots& e_{n-1}^* \end{bmatrix}'$.
\end{Corollary}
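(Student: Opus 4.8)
The plan is to let $\lambda_n\to\infty$ in the spectral representation of Theorem~\ref{thm:rfmr_lin} and match the resulting limit with the spectral representation of the steady-state RFM. Write $A=A(\lambda_1,\dots,\lambda_n)$ and let $\sigma$, $\zeta\in\R^n_{++}$ (normalized to unit Euclidean length) be its Perron root and Perron eigenvector. As already noted before the statement, $A\to\hat B$ entrywise as $\lambda_n\to\infty$, where $\hat B$ is the matrix in~\eqref{eq:matconv}; and comparing~\eqref{eq:matconv} with~\eqref{eq:bmatb} shows that $\hat B$ is precisely the Jacobi matrix $B(\tilde\lambda)$ of a RFM of dimension $n-2$ with rates $\tilde\lambda_0=\lambda_1,\dots,\tilde\lambda_{n-2}=\lambda_{n-1}$.

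The first step is a continuity argument for the Perron data. The limit matrix $\hat B=B(\tilde\lambda)$ is symmetric, componentwise nonnegative, and irreducible (its sub- and super-diagonals are strictly positive), so by Perron--Frobenius its Perron root $\hat\mu>0$ is a \emph{simple} eigenvalue and its Perron eigenvector $\hat\zeta$ is strictly positive and unique up to scaling. Since the spectral radius is continuous, and since a simple eigenvalue together with its normalized eigenvector depends continuously on the matrix entries, $\sigma\to\hat\mu$ and $\zeta\to\hat\zeta$ as $\lambda_n\to\infty$; in particular $\hat\zeta\neq0$ and every ratio $\hat\zeta_{i+1}/\hat\zeta_i$ is a finite positive number. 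Feeding this, and $\lambda_n^{-1/2}\to0$, into~\eqref{eq:spect_rep} shows that the optimal steady-state itself has a well-defined limit, with $R^*\to\hat\mu^{-2}$, with $e_1^*\to1$ and $e_n^*\to0$ (using the first row of $\hat B\hat\zeta=\hat\mu\hat\zeta$), and, for $i=2,\dots,n-1$,
\[
e_i^*=\lambda_i^{-1/2}\sigma^{-1}\frac{\zeta_{i+1}}{\zeta_i}\ \longrightarrow\ \lambda_i^{-1/2}\,\hat\mu^{-1}\,\frac{\hat\zeta_{i+1}}{\hat\zeta_i}.
\]

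On the RFM side, $\hat\mu=\tilde R^{-1/2}$ for the steady-state production rate $\tilde R$ of the RFM with matrix $B(\tilde\lambda)$ (the result of~\cite{rfm_max}), and the corresponding density identity --- checked directly against the three-term recurrence $B(\tilde\lambda)\eta=\hat\mu\eta$ and the RFM steady-state equations, in the same way that~\eqref{eq:spect_rep} is obtained from $A\zeta=\sigma\zeta$ --- is $\tilde e_j=\tilde\lambda_j^{-1/2}\hat\mu^{-1}\eta_{j+1}/\eta_j$, $j=1,\dots,n-2$, where $\eta$ is the Perron eigenvector of $B(\tilde\lambda)$ with its rows indexed $0,\dots,n-1$ as in~\eqref{eq:bmatb}. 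Since $B(\tilde\lambda)=\hat B$, the two Perron eigenvectors agree up to a scalar, $\eta_k=c\,\hat\zeta_{k+1}$, and the scalar cancels in the ratios; with $\tilde\lambda_j=\lambda_{j+1}$ this gives $\tilde e_j=\lambda_{j+1}^{-1/2}\hat\mu^{-1}\hat\zeta_{j+2}/\hat\zeta_{j+1}$ for $j=1,\dots,n-2$, which is exactly the limit of $e_{j+1}^*$ displayed above (set $i=j+1$). Hence $\tilde e=[\,e_2^*\ \cdots\ e_{n-1}^*\,]'$.

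I expect the only real obstacle to be the index bookkeeping: the RFMR indexes its $n$ sites and its $n\times n$ matrix by $1,\dots,n$ with all indices read mod $n$, whereas the RFM of dimension $n-2$ indexes the $n$ rows of $B(\tilde\lambda)$ by $0,\dots,n-1$, so the identification $\hat B=B(\tilde\lambda)$ must be made with the eigenvector entries correctly aligned before the two density formulas are compared. The other point to state carefully (rather than merely use) is the continuity of the Perron root and eigenvector in the limit $\lambda_n\to\infty$; this is exactly where irreducibility of $\hat B$ enters, guaranteeing both that the limiting Perron root is simple and that the eigenvectors --- not just the eigenlines --- converge. As a self-contained alternative to invoking the RFM density identity, one can instead push the limit through the steady-state relations~\eqref{eq:rfmr_e} directly: from $e_1^*\to1$, $e_n^*\to0$ and $\lambda_n e_n^*(1-e_1^*)\to R^*\to\hat\mu^{-2}$ it follows that the limiting middle block $(\,\lim e_2^*,\dots,\lim e_{n-1}^*\,)\in\Int(C^{n-2})$ solves the steady-state relations of the RFM with rates $\lambda_1,\dots,\lambda_{n-1}$, and the conclusion then follows from uniqueness of the RFM steady-state~\cite{RFM_stability}.
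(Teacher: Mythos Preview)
Your proposal is correct and follows the same route as the paper: the paper's entire argument is the paragraph preceding the Corollary, namely that $A(\lambda)\to\hat B$ entrywise as $\lambda_n\to\infty$, that $\hat B$ coincides with the RFM matrix $B(\tilde\lambda)$ in~\eqref{eq:bmatb}, and that the conclusion then follows by comparing the spectral representations (Thm.~\ref{thm:rfmr_lin} and~\cite{rfm_max}). You carry out exactly this matching but with more care---making explicit the continuity of the simple Perron root and eigenvector at the irreducible limit, verifying $e_1^*\to1$ and $e_n^*\to0$, writing out the RFM density formula $\tilde e_j=\tilde\lambda_j^{-1/2}\hat\mu^{-1}\eta_{j+1}/\eta_j$, and handling the index shift---while your alternative argument via the steady-state equations and uniqueness of the RFM equilibrium parallels the intuitive explanation the paper gives after the Corollary.
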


This implies that the steady-state of a RFM with arbitrary
 dimension~$m$ and arbitrary  rates~$\tilde \lambda_i>0$ can be derived from the steady-state of an RFMR with dimension $n:=m+2$, rates $\lambda_i=\tilde \lambda_{i-1}$, $i=1,\dots,n-1$, $\lambda_n\to\infty$, that is initialized with the optimal total density $s^*$. In this respect,  the~RFM is a kind of 
   ``open-boundaries''~RFMR that is  initialized with the optimal total density.

This connection between the two models can be explained  as follows. By~\eqref{eq:rfmr},  in an RFMR
with~$\lambda_n\to\infty$, the steady-state density at site~$n$ will be zero, and at site~$1$ it will be one. Indeed, the
transition rate from site~$n$ to site~$1$ is infinite, so site~$n$ will be completely emptied and site~$1$ completely
filled.
 This  ``disconnects'' the ring  at the link from site $n$ to site $1$. Furthermore,
  the completely  full site~$1$ serves as a ``source'' to site~$2$
    whereas the completely empty site~$n$ serves as a ``sink'' to site~$n-1$.
 The result is that sites~$2,\dots,n-1$, of the RFMR become   a~RFM  with dimension~$n-2$.
 The next example demonstrates this.

\begin{Example}
Consider a RFMR with dimension $n=5$, and rates
$\lambda_1=0.8,\lambda_2=0.6,\lambda_3=0.4,\lambda_4=0.7$, and $\lambda_5=0.5$. The optimal steady-state values
are:
\[
e^*=\begin{bmatrix} 0.4260 & 0.5831 & 0.5939 & 0.4019 & 0.4950 \end{bmatrix}', \;\; R^*=0.1421.
\]
For $\lambda_5=100$, the optimal steady-state values are:
\[
e^*=\begin{bmatrix} 0.9440 &0.7628 & 0.6087 &0.2643 &0.0320 \end{bmatrix}', \;\; R^*=0.1791,
\]
for $\lambda_5=10,000$, the optimal steady-state values are:
\[
e^*=\begin{bmatrix} 0.9942 & 0.7727 & 0.6100 & 0.2591 &0.0031 \end{bmatrix}', \;\; R^*=0.1808,
\]
and for $\lambda_5=1,000,000$, they are:
\be\label{eq:exp_l5large}
e^*=\begin{bmatrix} 0.9994 & 0.7737 & 0.6102 & 0.2586 & 0.0003 \end{bmatrix}', \;\; R^*=0.1810.
\ee
It may be observed that as $\lambda_5$ increases, the optimal steady-state density at site $5$ [site~$1$] decreases
[increases] to zero [one]. On the other hand, for a RFM with dimension $n=3$ and rates $\tilde \lambda_0=0.8, \tilde
\lambda_1=0.6, \tilde \lambda_2=0.4$, and $\tilde \lambda_3=0.7$, the steady-state values are:
 $
\tilde e=\begin{bmatrix} 0.7738 & 0.6102 & 0.2585 \end{bmatrix}'$, and $\tilde R=0.1810$ (compare   to~\eqref{eq:exp_l5large}).
\end{Example}


\subsection{Sensitivity Analysis}

We already know that given the transition rates~$\lambda_1,\dots,\lambda_n$, the RFMR admits a unique
density~$s^*(\lambda_1,\dots,\lambda_n)$ for which the steady-state production rate is maximized.
Maximizing the steady-state production rate is a standard goal in biotechnology, and since codons may be replaced by their
synonymous, an important question in the context of the~RFMR is: how will a change in the rates affect the maximal
production rate~$R^*$? Note that the effect here is compound, as changing the rates also changes the optimal density that
yields the maximal production rate.

In this section, we analyze \be\label{eq:si_sens} \phi_i(\LMDR):=\frac{\partial}{\partial \lambda_i}R^*(\LMDR), \quad
i=1,\dots,n, \ee i.e. the sensitivity of the optimal steady-state production rate $R^*$ with respect to~$\lambda_i$.

A relatively large value of $\phi_i$ indicates that a small change in $\lambda_i$ will have a strong impact on the optimal
steady-state production rate $R^*$. In other words, the sensitivities  indicate
 which rates are the most ``important'' in
terms of their effect on $R^*$. The results in Thm.~\ref{thm:rfmr_lin} allow
 to compute the sensitivities using the spectral properties of the matrix~$A$.
\begin{Proposition}\label{prop:sense}
The sensitivities satisfy: 
\be\label{eq:R_sens}
\phi_i=\frac{2  \zeta_i \zeta_{i+1}}{\sigma^{3} \lambda_i^{3/2} \zeta' \zeta}, \quad i=1,\dots,n. 
\ee
\end{Proposition}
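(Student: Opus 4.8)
The plan is to differentiate the identity $R^* = \sigma^{-2}$ from Theorem~\ref{thm:rfmr_lin} with respect to $\lambda_i$, which reduces the problem to computing $\partial \sigma/\partial \lambda_i$. By the chain rule, $\phi_i = \frac{\partial}{\partial \lambda_i}\sigma^{-2} = -2\sigma^{-3}\frac{\partial \sigma}{\partial \lambda_i}$, so it suffices to show $\frac{\partial \sigma}{\partial \lambda_i} = -\frac{\zeta_i\zeta_{i+1}}{\lambda_i^{3/2}\,\zeta'\zeta}$. The natural tool here is the standard first-order eigenvalue perturbation formula: for a symmetric matrix $A$ depending smoothly on a parameter, with simple eigenvalue $\sigma$ and unit-norm eigenvector, the derivative of $\sigma$ equals $\zeta' (\partial A/\partial\lambda_i)\,\zeta / (\zeta'\zeta)$. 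Since the Perron root of the irreducible nonnegative matrix $A$ is simple (as noted before the statement of Theorem~\ref{thm:rfmr_lin}), this applies.

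The key computational step is to identify $\partial A/\partial \lambda_i$. Inspecting~\eqref{eq:matA}, the entry $\lambda_i^{-1/2}$ appears exactly in positions $(i,i+1)$ and $(i+1,i)$ (indices mod $n$; for $i=n$ these are the $(n,1)$ and $(1,n)$ corner entries), and nowhere else. Hence $\partial A/\partial \lambda_i = -\tfrac{1}{2}\lambda_i^{-3/2}(E_{i,i+1}+E_{i+1,i})$, where $E_{jk}$ is the matrix with a single $1$ in position $(j,k)$. Therefore $\zeta'(\partial A/\partial\lambda_i)\zeta = -\tfrac{1}{2}\lambda_i^{-3/2}\cdot 2\,\zeta_i\zeta_{i+1} = -\lambda_i^{-3/2}\zeta_i\zeta_{i+1}$, and the eigenvalue formula gives $\frac{\partial\sigma}{\partial\lambda_i} = -\lambda_i^{-3/2}\zeta_i\zeta_{i+1}/(\zeta'\zeta)$. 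Substituting back yields $\phi_i = -2\sigma^{-3}\cdot\bigl(-\lambda_i^{-3/2}\zeta_i\zeta_{i+1}/(\zeta'\zeta)\bigr) = \frac{2\zeta_i\zeta_{i+1}}{\sigma^3\lambda_i^{3/2}\,\zeta'\zeta}$, which is exactly~\eqref{eq:R_sens}.

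I do not anticipate a serious obstacle; the argument is essentially a one-line perturbation computation once the sparsity pattern of $A$ is read off correctly. The only points requiring a little care are: (i) justifying that $\sigma$ and $\zeta$ are differentiable functions of the rates — this follows from simplicity of the Perron eigenvalue together with the analytic dependence of $A$ on $\lambda$, via standard results on perturbation of simple eigenvalues; (ii) bookkeeping the wrap-around index $i=n$ correctly, noting that the corner entries $(n,1)$ and $(1,n)$ of~\eqref{eq:matA} are precisely $\lambda_n^{-1/2}$, so the formula $\phi_n = 2\zeta_n\zeta_1/(\sigma^3\lambda_n^{3/2}\zeta'\zeta)$ is consistent with the ``indices mod $n$'' convention already in force; and (iii) making sure the eigenvalue-derivative formula is quoted with the correct normalization — using the Rayleigh-quotient form $\sigma(\lambda) = \zeta'A\zeta/(\zeta'\zeta)$ and differentiating, the terms involving $\partial\zeta/\partial\lambda_i$ cancel because $\zeta$ is an eigenvector, which is why the unnormalized $\zeta$ appearing in Theorem~\ref{thm:rfmr_lin} can be used directly as long as one divides by $\zeta'\zeta$.
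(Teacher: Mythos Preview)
Your proposal is correct and follows essentially the same route as the paper's proof: differentiate $R^*=\sigma^{-2}$, apply the standard simple-eigenvalue perturbation formula $\partial\sigma/\partial\lambda_i=\zeta'(\partial A/\partial\lambda_i)\zeta/(\zeta'\zeta)$, and read off $\partial A/\partial\lambda_i$ from the fact that only the $(i,i+1)$ and $(i+1,i)$ entries of~$A$ depend on~$\lambda_i$. Your added remarks on differentiability, the wrap-around index, and the Rayleigh-quotient normalization are all sound and in fact make the justification slightly more explicit than in the paper.
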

Eq.~\eqref{eq:R_sens} provides an efficient and numerically stable method to calculate
 the sensitivities   for large-scale rings and arbitrary positive rates~$\lambda_i$s
using standard algorithms for computing the eigenvalues and eigenvectors of periodic Jacobi matrices.
 Note that~\eqref{eq:R_sens} implies that all the sensitivities are positive.

\begin{Example}
 Fig.~\ref{fig:sens_n98_2} depicts $\ln(\phi_i)$, computed using~\eqref{eq:R_sens},
 as a function of~$i$ for a~RFMR with dimension~$n=98$  and   rates~$\lambda_1=\lambda_{50}=0.3$
and~$\lambda_i=1$ for all other~$i$. Here the maximal sensitivity is~$\phi_1=\phi_{50}$, and the sensitivities
decrease as we move away from sites~$1$ and~$50$. This makes sense
 as the corresponding rates are the bottleneck rates in this example. 
\end{Example}

\begin{figure}[t]
 \begin{center}
  \includegraphics[scale=0.5]{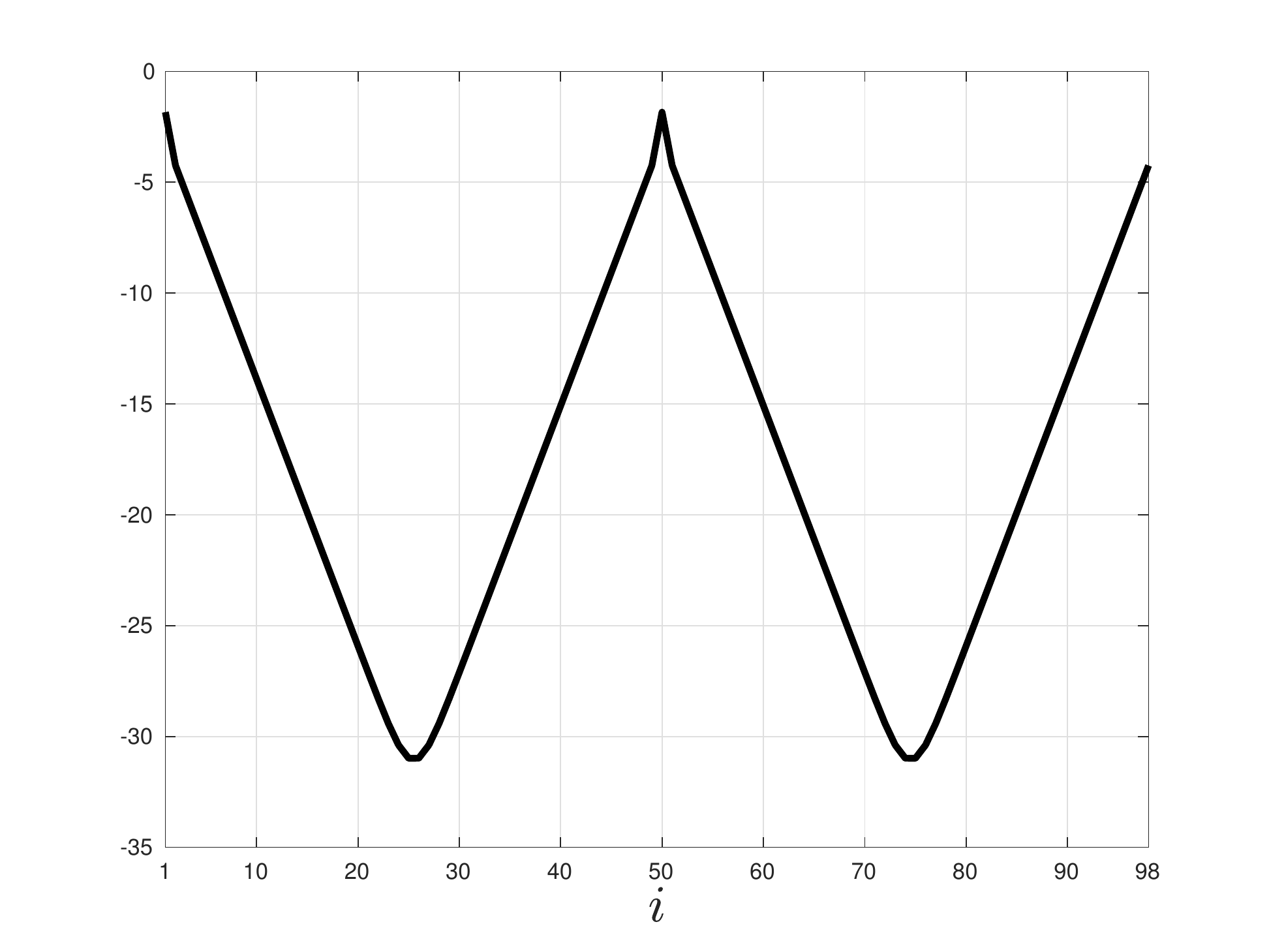}
\caption{$\ln(\phi_i)$ as a function of $i$ for a RFMR with $n=98$ and with rates $\lambda_1=\lambda_{50}=0.3$ and
$\lambda_i=1$, for all other rates . Note that the maximal sensitivities are~$\phi_1,\phi_{50}$, and that the
sensitivities decrease as we move away from sites~$1$ and~$50$ (recall that the topology  is circular).
}\label{fig:sens_n98_2}
\end{center}
\end{figure}

Eq.~\eqref{eq:R_sens} implies that 
\be\label{eq:sens_ratio}
\frac{\phi_i}{\phi_j}
=\frac{\zeta_i\zeta_{i+1}}{\zeta_j\zeta_{j+1}}  \left (\frac{\lambda_j}{\lambda_i}\right)^{3/2}, \quad
i,j\in\{1,\dots,n\}, 
\ee 
that is, the ratio between any two sensitivities is determined by the corresponding Perron
eigenvalue components and the corresponding rates. One may expect that the highest sensitivity will correspond to the
minimal rate, but~\eqref{eq:sens_ratio} shows that this is not necessarily so. The next example demonstrates this.

\begin{Example}
Consider a RFMR with dimension $n=7$ and rates:
\[
\lambda = \begin{bmatrix} 1 & 1.1 & 0.55 & 1.4 & 1.3 & 0.95 & 0.6 \end{bmatrix}'.
\]
In this case, $R^*=0.2213$. Using~\eqref{eq:R_sens} yields the sensitivities:
\[
\phi = \begin{bmatrix} 0.0355 &   0.0288 &   0.0774  &  0.0129  &  0.0124  &  0.0298  &  0.0820 \end{bmatrix}'.
\]
Note that although the minimum rate is~$\lambda_3$, the maximal sensitivity is~$\phi_7$. This implies that
increasing~$\lambda_7$ by some small value~$\varepsilon>0$
 will increase~$R^*$ more than the increase due to increasing any other rate by~$\varepsilon$.
For example, increasing~$\lambda_3$ by~$0.05$ (and leaving all other rates unchanged) yields~$R^*=0.2248$, while
increasing $\lambda_7$ by $0.05$ instead  (and leaving all other rates unchanged) yields~$R^*=0.2251$.
\end{Example}

The spectral approach can also be used to derive theoretical results on the sensitivities. The next three results
demonstrate  this.

\begin{Proposition}\label{prop:si01}
The sensitivities satisfy $0<\phi_i\le 1$ for all~$i=1,\dots,n$.
\end{Proposition}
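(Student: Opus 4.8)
The plan is to leverage the sensitivity formula from Proposition~\ref{prop:sense}, namely $\phi_i = \dfrac{2\zeta_i\zeta_{i+1}}{\sigma^3\lambda_i^{3/2}\,\zeta'\zeta}$, together with the eigenvalue equation $A\zeta = \sigma\zeta$. Positivity of each $\phi_i$ is immediate since $\sigma>0$, $\lambda_i>0$, and $\zeta\in\R^n_{++}$, so the entire content is the upper bound $\phi_i\le 1$. First I would write out the $i$-th component of the Perron equation $A\zeta=\sigma\zeta$: since the only nonzero entries in row~$i$ of $A$ are $\lambda_{i-1}^{-1/2}$ in column $i-1$ and $\lambda_i^{-1/2}$ in column $i+1$ (all indices mod~$n$), this reads
\[
\lambda_{i-1}^{-1/2}\zeta_{i-1} + \lambda_i^{-1/2}\zeta_{i+1} = \sigma\zeta_i .
\]
In particular $\lambda_i^{-1/2}\zeta_{i+1} \le \sigma\zeta_i$, i.e. $\lambda_i^{-1/2}\zeta_{i+1}\zeta_i \le \sigma\zeta_i^2$, and symmetrically (using row $i+1$, whose relevant entry is $\lambda_i^{-1/2}$ in column $i$) $\lambda_i^{-1/2}\zeta_i\zeta_{i+1}\le\sigma\zeta_{i+1}^2$. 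Adding these two inequalities gives $2\lambda_i^{-1/2}\zeta_i\zeta_{i+1} \le \sigma(\zeta_i^2+\zeta_{i+1}^2)$.

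The key step then is to bound $\phi_i$ using this. Rewrite $\phi_i = \dfrac{2\lambda_i^{-1/2}\zeta_i\zeta_{i+1}}{\sigma^3\lambda_i\,\zeta'\zeta}$, and substitute the inequality just derived:
\[
\phi_i \le \frac{\sigma(\zeta_i^2+\zeta_{i+1}^2)}{\sigma^3\lambda_i\,\zeta'\zeta}
= \frac{\zeta_i^2+\zeta_{i+1}^2}{\sigma^2\lambda_i\,\zeta'\zeta}.
\]
Now $\zeta_i^2+\zeta_{i+1}^2 \le \zeta'\zeta = \sum_{k=1}^n\zeta_k^2$ (all terms positive, $n>2$), so $\phi_i \le \dfrac{1}{\sigma^2\lambda_i}$. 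It remains to show $\sigma^2\lambda_i\ge 1$, equivalently $\sigma \ge \lambda_i^{-1/2}$. This should follow from the fact that $A$ has a $2\times 2$ principal submatrix (rows/columns $i$ and $i+1$) equal to $\begin{bmatrix}0 & \lambda_i^{-1/2}\\ \lambda_i^{-1/2} & 0\end{bmatrix}$, whose largest eigenvalue is $\lambda_i^{-1/2}$; by Cauchy interlacing (or the variational characterization $\sigma = \max_{\|v\|=1} v'Av$ applied to a unit vector supported on coordinates $i,i+1$), the Perron root of $A$ is at least this, i.e. $\sigma\ge\lambda_i^{-1/2}$. Hence $\sigma^2\lambda_i\ge 1$ and $\phi_i\le 1$.

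Alternatively, and perhaps more cleanly, one can avoid the interlacing argument and instead use $R^*=\sigma^{-2}$ together with $R^* \le \lambda_i$ for every $i$ — since the steady-state production rate in the RFMR is the common flow $\lambda_i e_i^*(1-e_{i+1}^*)$ through each link (by~\eqref{eq:rfmr_e}), and $e_i^*(1-e_{i+1}^*)\le 1$, so $R^*\le\lambda_i$, giving $\sigma^2\lambda_i = \lambda_i/R^* \ge 1$. The main obstacle, such as it is, is recognizing that the two "halves" of the component-wise eigenvalue inequality (from row $i$ and from row $i+1$) must be combined to produce the symmetric bound $2\lambda_i^{-1/2}\zeta_i\zeta_{i+1}\le\sigma(\zeta_i^2+\zeta_{i+1}^2)$; after that, everything reduces to elementary inequalities plus the already-available facts $R^*=\sigma^{-2}$ and $R^*\le\lambda_i$.
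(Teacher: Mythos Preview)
Your proof is correct, and it takes a genuinely different route from the paper's. The paper argues by perturbation: it replaces $\lambda_i$ by $\lambda_i+\varepsilon$, observes that the resulting matrix $\bar A$ differs from $A$ by a symmetric matrix whose eigenvalues are $\pm\frac{\varepsilon}{2}\lambda_i^{-3/2}+o(\varepsilon)$, and invokes Weyl's inequality to obtain $\frac{\partial\sigma}{\partial\lambda_i}\ge-\frac{1}{2}\lambda_i^{-3/2}$, hence $\phi_i\le(R^*/\lambda_i)^{3/2}\le 1$. You instead work directly from the explicit sensitivity formula of Proposition~\ref{prop:sense}: by reading off rows $i$ and $i{+}1$ of $A\zeta=\sigma\zeta$ you get $2\lambda_i^{-1/2}\zeta_i\zeta_{i+1}\le\sigma(\zeta_i^2+\zeta_{i+1}^2)\le\sigma\,\zeta'\zeta$, which immediately gives $\phi_i\le(\sigma^2\lambda_i)^{-1}=R^*/\lambda_i\le 1$. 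Your argument is more elementary (no Weyl inequality needed) and stays entirely within the spectral data already on the table; the paper's perturbation argument, on the other hand, produces the slightly sharper intermediate bound $\phi_i\le(R^*/\lambda_i)^{3/2}$. Both approaches close with the same observation $R^*=\lambda_i e_i^*(1-e_{i+1}^*)\le\lambda_i$.
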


This implies that
 an increase [decrease] in any of the rates by~$\varepsilon$
  increases [decreases]  the optimal steady-state production rate by no more than~$\varepsilon$.

\begin{Proposition}\label{prop:sens_homog}
Consider a RFMR with dimension~$n$ and homogeneous rates~\eqref{eq:homog_rates}. Then
\[
\phi_i = \frac{1}{4n}, \quad i=1,\dots,n.
\]
\end{Proposition}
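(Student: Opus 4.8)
The plan is to combine the sensitivity formula of Proposition~\ref{prop:sense} with the explicit spectral data of the homogeneous matrix already recorded in Example~\ref{exa:homg}. Concretely, when $\lambda_1=\cdots=\lambda_n=\lambda_c$, every row of the periodic Jacobi matrix $A$ in~\eqref{eq:matA} contains exactly two nonzero entries, each equal to $\lambda_c^{-1/2}$, so $A\,\mathbf 1_n = 2\lambda_c^{-1/2}\,\mathbf 1_n$, where $\mathbf 1_n\in\R^n_{++}$ is the all-ones vector. Since $\mathbf 1_n$ has strictly positive entries, the Perron--Frobenius theorem forces it to be (a multiple of) the Perron eigenvector, and hence $\sigma = 2\lambda_c^{-1/2}$, $\zeta = \mathbf 1_n$. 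This is exactly the statement in Example~\ref{exa:homg}.

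Next I would substitute this data into~\eqref{eq:R_sens}. With $\zeta_i=\zeta_{i+1}=1$ for all $i$, $\zeta'\zeta = n$, $\lambda_i = \lambda_c$, and $\sigma^3 = 8\lambda_c^{-3/2}$, formula~\eqref{eq:R_sens} gives, for each $i=1,\dots,n$,
\[
\phi_i \;=\; \frac{2\,\zeta_i\zeta_{i+1}}{\sigma^{3}\lambda_i^{3/2}\,\zeta'\zeta}
\;=\; \frac{2}{\bigl(8\lambda_c^{-3/2}\bigr)\,\lambda_c^{3/2}\, n}
\;=\; \frac{2}{8n}
\;=\; \frac{1}{4n},
\]
which is the claimed value.

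There is essentially no obstacle here: the result is a one-line evaluation once Proposition~\ref{prop:sense} and the homogeneous Perron data are in hand; the only point requiring a word of justification is the identification of $(\sigma,\zeta)$, which is immediate from irreducibility and nonnegativity of $A$. As a consistency check one can note that $\sum_{i=1}^n \phi_i = n\cdot\frac{1}{4n} = \frac14$, which matches Euler's identity for the degree-one homogeneous function $R^*$ (cf.~\eqref{eq:hom_deg1}), namely $\sum_{i=1}^n \lambda_i\phi_i = R^*$ evaluated at $\lambda_i\equiv\lambda_c$ with $R^* = \lambda_c/4$ from Example~\ref{exa:homg}; it is also consistent with the bound $0<\phi_i\le 1$ of Proposition~\ref{prop:si01}.
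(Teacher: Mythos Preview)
Your proof is correct and follows exactly the same approach as the paper's own proof: invoke the explicit Perron data from Example~\ref{exa:homg} and substitute into the sensitivity formula~\eqref{eq:R_sens} of Proposition~\ref{prop:sense}. The additional Euler-homogeneity consistency check is a nice touch but not needed for the argument.
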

This means that in the homogeneous case, all the sensitivities are equal. This is of course expected, as the circular
topology of the sites implies that all the rates have the same effect on~$R^*$. Furthermore, the sensitivities decrease
with~$n$, i.e. in a longer ring each rate has a smaller effect on~$R^*$.

Assume now that the RFMR rates satisfy
\be\label{eq:sym_rates}
\lambda_i=\lambda_{n-i}, \quad i=1,\dots,n-1,
\ee
i.e. the rates are \emph{symmetric}. Note that since all indexes are interpreted modulo $n$, it is enough that~\eqref{eq:sym_rates} holds for some cyclic permutation of the rates. For example, for $n=3$ the rates are symmetric if at least two of the rates $\lambda_1,\lambda_2,\lambda_3$ are equal.

\begin{Proposition}\label{prop:sym_sens}
Consider a RFMR with dimension $n$ and symmetric rates~\eqref{eq:sym_rates}. Then 
\[
\phi_i=\phi_{n-i}, \quad i=1,\dots,n-1.
\]
\end{Proposition}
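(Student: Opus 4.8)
The plan is to combine the sensitivity formula of Proposition~\ref{prop:sense} with a reflection symmetry of the periodic Jacobi matrix~$A$ that is forced by the symmetry of the rates~\eqref{eq:sym_rates}. Write~\eqref{eq:R_sens} as $\phi_k=\dfrac{2\zeta_k\zeta_{k+1}}{\sigma^{3}\lambda_k^{3/2}\,\zeta'\zeta}$, with all indices taken modulo~$n$. When $\phi_i$ is compared with $\phi_{n-i}$, the common factors $\sigma^{3}$ and $\zeta'\zeta$ cancel, and $\lambda_{n-i}^{3/2}=\lambda_i^{3/2}$ by~\eqref{eq:sym_rates}, so it suffices to prove $\zeta_i\zeta_{i+1}=\zeta_{n-i}\zeta_{n-i+1}$ for $i=1,\dots,n-1$; and for this it is enough to show that the Perron eigenvector is ``palindromic'', $\zeta_k=\zeta_{n+1-k}$ for all~$k$.

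The first step is to check that the reversal permutation $\pi(k):=n+1-k$ on $\{1,\dots,n\}$ leaves $A$ invariant, i.e.\ $\Pi A\Pi'=A$, where $\Pi$ is the associated permutation matrix. This is a direct inspection of the nonzero entries of~\eqref{eq:matA}: the super-/sub-diagonal entry $\lambda_k^{-1/2}$ at position $(k,k+1)$ is sent to the entry of $A$ at position $(n{+}1{-}k,\,n{-}k)$, which equals $\lambda_{n-k}^{-1/2}$, and this coincides with $\lambda_k^{-1/2}$ precisely by~\eqref{eq:sym_rates}; the two corner entries $A_{1,n}=A_{n,1}=\lambda_n^{-1/2}$ are sent to each other; and the diagonal is zero. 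Note that $\pi$ is an involution, so $\Pi^2=I$. Now transfer this to the Perron data: since $\Pi A\Pi'=A$ and $A\zeta=\sigma\zeta$, the vector $\Pi\zeta$ is again a componentwise-positive eigenvector of $A$ for its Perron root~$\sigma$. By the Perron--Frobenius theorem ($A$ being nonnegative and irreducible, as noted in Section~\ref{sec:main}, its positive eigenvector is unique up to a positive scalar) we must have $\Pi\zeta=c\zeta$ for some $c>0$, and applying $\Pi$ once more and using $\Pi^2=I$ forces $c^{2}=1$, hence $c=1$. Therefore $\zeta_{\pi(k)}=\zeta_k$, i.e.\ $\zeta_k=\zeta_{n+1-k}$ for all~$k$.

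It remains to substitute: from $\zeta_k=\zeta_{n+1-k}$ we read off $\zeta_{n-i}=\zeta_{i+1}$ and $\zeta_{n-i+1}=\zeta_i$, hence $\zeta_{n-i}\zeta_{n-i+1}=\zeta_i\zeta_{i+1}$, which together with $\lambda_{n-i}=\lambda_i$ gives $\phi_{n-i}=\phi_i$ for $i=1,\dots,n-1$. The only points that need a little care are the modular index bookkeeping in the invariance check and the verification that the proportionality constant in $\Pi\zeta=c\zeta$ is exactly~$1$ rather than an arbitrary positive scalar; neither is a genuine obstacle. One could instead establish $\zeta_k=\zeta_{n+1-k}$ by a direct manipulation of the eigenvalue equations $A\zeta=\sigma\zeta$, but the symmetry argument above is cleaner and exposes the structural reason for the claim.
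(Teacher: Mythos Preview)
Your proof is correct, and the overall strategy matches the paper's: both reduce Proposition~\ref{prop:sym_sens} to the palindrome property $\zeta_k=\zeta_{n+1-k}$ of the Perron eigenvector (stated in the paper as an auxiliary Proposition~\ref{prop:sym_zeta}) and then substitute into~\eqref{eq:R_sens}.

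The difference lies in how that palindrome property is established. The paper partitions $A$ into a $2\times 2$ block form using the half-size reversal matrix~$Q$, manipulates the eigenvalue equations to obtain $(A_1-A_2Q-\sigma I)(\zeta^1-Q\zeta^2)=0$, and then argues nonsingularity of the left factor via the strict inequality $\rho(A_1)<\sigma$; the cases $n$ even and $n$ odd are handled separately. Your route is shorter and more structural: the single observation $\Pi A\Pi'=A$ plus the simplicity of the Perron eigenspace immediately forces $\Pi\zeta=\zeta$, with the involution $\Pi^2=I$ pinning the scalar to~$1$, and the argument works uniformly in~$n$. The paper's block computation makes the internal structure of $A$ more explicit, which could be of independent use, but for the purpose of this proposition your symmetry argument is cleaner and exposes the mechanism more transparently.
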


\begin{Example}
Consider a RFMR with dimension $n=6$ and rates $\lambda_1=\lambda_5=1$, $\lambda_2=\lambda_4=1.2$, $\lambda_3=0.8$ and $\lambda_6=1.5$. Note that these rates satisfy~\eqref{eq:sym_rates}. The sensitivities are:  
\[
\phi = \begin{bmatrix} 0.0408   & 0.0388  &  0.0804  &  0.0388  &  0.0408  &  0.0200 \end{bmatrix}',
\]
and it may be observed that $\phi_i=\phi_{6-i}$, $i=1,\dots,5$.
\end{Example}

\subsection{Optimizing the production rate}
Any set of rates~$\lambda=(\LMDR)$ induces an optimal density~$s^*$ and the~RFMR initialized with this total density
yields a maximal production rate~$R^*$ (with respect to all other initial densities). 
This yields a mapping~$\lambda \to R^*(\lambda)$. 
 Now suppose that we have some set, denoted by~$\Omega$, of~$n$-dimensional vectors with positive entries. 
Every vector from $\Omega$ can be used as a set of rates~$\lambda$ for the~RFMR, and thus yields a value~$R=R^*(\lambda)$. 
A natural question is: determine a vector~$\eta \in \Omega $ that yields the maximal value, that is,
\[
				R(\eta) = \max_{\lambda \in \Omega} R^*(\lambda).
\]

In the context of translation, this means that a circular mRNA with rates~$\eta$, 
  initialized with total density~$s^*(\eta)$, will yield a steady-state production rate that is higher than that obtained for all the other options for the rate vector in~$\Omega$ (regardless
of the initial total density in these other circular mRNAs).

The next result is essential for efficiently analyzing the maximization of $R^*$ with respect to (w.r.t.) its rates.
\begin{Proposition}\label{prop:R_concave}
Consider a RFMR with dimension $n$. The mapping $\lambda:=(\LMDR)\mapsto R^*(\lambda)$ is strictly concave on $\R^n_{++}$.
\end{Proposition}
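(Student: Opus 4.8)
The plan is to leverage the spectral identity $R^{*}=\sigma^{-2}$ of Theorem~\ref{thm:rfmr_lin}, where $\sigma$ is the Perron root of the periodic Jacobi matrix $A(\lambda)$ from~\eqref{eq:matA}, together with the variational (Rayleigh quotient) description of $\sigma$. Since $A(\lambda)$ is symmetric, componentwise nonnegative and irreducible, its largest eigenvalue is attained at a strictly positive eigenvector, so (all indices mod $n$)
\[
\sigma=\max_{v\in\R^{n}_{++}}\frac{v'A(\lambda)v}{v'v}=\max_{v\in\R^{n}_{++}}\frac{2\sum_{i=1}^{n}\lambda_{i}^{-1/2}v_{i}v_{i+1}}{\sum_{i=1}^{n}v_{i}^{2}},
\]
the maximum being achieved at the Perron eigenvector $\zeta(\lambda)$. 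Taking reciprocals and squaring yields
\[
R^{*}(\lambda)=\min_{v\in\R^{n}_{++}}G(v,\lambda),\qquad G(v,\lambda):=\frac{\bigl(\sum_{j=1}^{n}v_{j}^{2}\bigr)^{2}}{4\bigl(\sum_{i=1}^{n}\lambda_{i}^{-1/2}v_{i}v_{i+1}\bigr)^{2}},
\]
with the minimum attained at $v=\zeta(\lambda)$. Thus $R^{*}$ is a pointwise infimum of the family $\{G(v,\cdot)\}_{v\in\R^{n}_{++}}$, and it suffices to analyze one member of this family.

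Next I would fix $v\in\R^{n}_{++}$ and write $G(v,\lambda)=K(v)\,g_{v}(\lambda)^{-2}$ with $K(v):=\tfrac14\bigl(\sum_{j}v_{j}^{2}\bigr)^{2}>0$ and $g_{v}(\lambda):=\sum_{i=1}^{n}v_{i}v_{i+1}\,\lambda_{i}^{-1/2}$. As a sum of strictly convex functions of the separate variables $\lambda_{i}$ (each coefficient $v_{i}v_{i+1}$ is positive and $t\mapsto t^{-1/2}$ is strictly convex on $(0,\infty)$), the function $g_{v}$ is positive and strictly convex on $\R^{n}_{++}$, and it is positively homogeneous of degree $-1/2$. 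Hence $G(v,\cdot)$ is positive, positively homogeneous of degree $1$, and for every $a>0$ its superlevel set $\{\lambda: G(v,\lambda)\ge a\}=\{\lambda: g_{v}(\lambda)\le(K(v)/a)^{1/2}\}$ is a \emph{strictly} convex set. A short lemma then gives that a positive function which is positively homogeneous of degree $1$ and has convex superlevel sets is superadditive, hence concave; and that when the superlevel sets are strictly convex, equality $G(v,x+y)=G(v,x)+G(v,y)$ for $x,y\in\R^{n}_{++}$ forces $x$ and $y$ to be positively proportional (otherwise $x/G(v,x)$, $y/G(v,y)$ and their convex combination $(x+y)/(G(v,x)+G(v,y))$ would lie on a common nondegenerate segment inside $\{G(v,\cdot)\ge1\}$, contradicting strict convexity).

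Assembling this: as a pointwise infimum of concave functions, $R^{*}$ is concave on $\R^{n}_{++}$, and together with the degree-one homogeneity $R^{*}(c\lambda)=cR^{*}(\lambda)$ of~\eqref{eq:hom_deg1} this makes $R^{*}$ superadditive. For strictness, suppose $R^{*}\bigl((1-t)x+ty\bigr)=(1-t)R^{*}(x)+tR^{*}(y)$ for some $x,y\in\R^{n}_{++}$, $t\in(0,1)$; by homogeneity this is equivalent to $R^{*}(u+w)=R^{*}(u)+R^{*}(w)$ with $u:=(1-t)x$, $w:=ty$. Writing $\zeta:=\zeta(u+w)$ for the Perron eigenvector at $u+w$, so that $R^{*}(u+w)=G(\zeta,u+w)$, and using superadditivity of $G(\zeta,\cdot)$ together with $G(\zeta,\cdot)\ge R^{*}$,
\[
R^{*}(u+w)=G(\zeta,u+w)\ge G(\zeta,u)+G(\zeta,w)\ge R^{*}(u)+R^{*}(w)=R^{*}(u+w),
\]
so every inequality is an equality; in particular $G(\zeta,u+w)=G(\zeta,u)+G(\zeta,w)$, whence $u$ and $w$ are positively proportional by the previous step, i.e.\ $x$ and $y$ are positively proportional. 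This is the strict concavity asserted in the proposition, read in the only sense available for a degree-one homogeneous map, and it is exactly what makes maximization of $R^{*}$ under an affine constraint a convex problem with a unique solution.

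The routine ingredients are the variational identity for $\sigma$ and the convexity bookkeeping for a single $G(v,\cdot)$. The delicate point, and the main obstacle, is the strictness step: one must notice that literal strict concavity is impossible because $R^{*}$ is linear along rays, and then propagate equality in the superadditivity of $R^{*}$ all the way down to equality in the superadditivity of a single $G(\zeta,\cdot)$ whose superlevel sets are strictly convex. A derivative-based alternative for this step uses that $\sigma$ is a simple eigenvalue, so $R^{*}\in C^{2}(\R^{n}_{++})$, that at an equality point the minimizing $G(\zeta,\cdot)$ touches $R^{*}$ from above so $\operatorname{Hess}G(\zeta,\cdot)\succeq\operatorname{Hess}R^{*}$ there, and a Cauchy--Schwarz computation yielding $\ker\operatorname{Hess}G(v,\cdot)|_{\lambda}=\operatorname{span}(\lambda)$, which again pins the degeneracy to the radial direction.
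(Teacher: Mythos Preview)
Your argument is correct and in fact more careful than the paper's. The paper proceeds differently: it observes that $\lambda\mapsto A(\lambda)$ is entrywise convex (from strict convexity of $t\mapsto t^{-1/2}$), invokes monotonicity of the Perron root of an irreducible nonnegative matrix to deduce that $\sigma(\lambda)$ is strictly convex, and then simply asserts that $R^*=\sigma^{-2}$ is strictly concave. That last implication is not justified in the paper---convexity of $\sigma$ alone only yields that $R^*$ is quasi-concave (its superlevel sets are sublevel sets of $\sigma$); the missing ingredient is precisely your homogeneity lemma, that a positive, degree-one homogeneous function with convex superlevel sets is concave. Your route, writing $R^*=\min_{v}G(v,\cdot)$ via the Rayleigh quotient and showing each $G(v,\cdot)$ is concave by that same lemma, makes this step explicit and avoids the detour through $\sigma$.

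You also spot something the paper overlooks: literal strict concavity on $\R^n_{++}$ is impossible because $R^*(c\lambda)=cR^*(\lambda)$, so the correct statement is concavity with equality only along rays. Your strictness argument---pushing equality in $R^*(u+w)=R^*(u)+R^*(w)$ down to equality in the superadditivity of a single $G(\zeta,\cdot)$ and then using strict convexity of its level sets---is exactly the right repair, and it is what the application to Problem~\ref{prob:const_opt} actually needs (the affine constraint set contains no ray through the origin). The paper's shorter route buys brevity but leaves both of these points unaddressed; your approach is longer but self-contained.
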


For example, for~$n=2$ it is straightforward to show that
\[
R^*(\lambda_1,\lambda_2)=\frac{\lambda_1 \lambda_2}{(\sqrt{\lambda_1}+\sqrt{\lambda_2})^2}.
\]
Fig.~\ref{fig:rfmr_n2_Rs} depicts $R^*(\lambda_1,\lambda_2)$ as a function of its parameters. It may be observed that this is a strictly concave function on $\R^2_{++}$.

\begin{figure}[t]
 \begin{center}
  \includegraphics[scale=0.5]{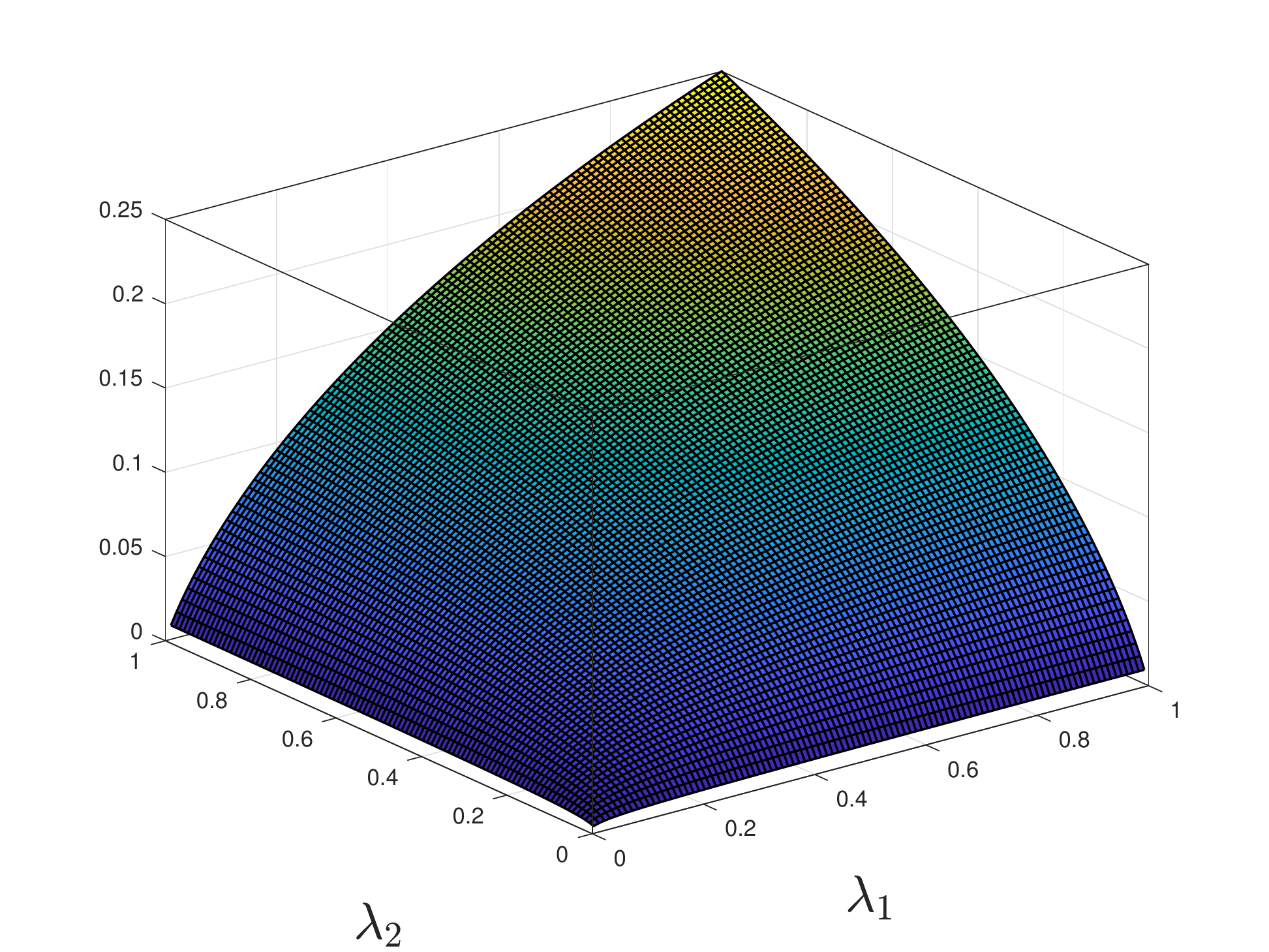}
\caption{$R^*(\lambda_1,\lambda_2)$ in RFMR with $n=2$ as a function of its parameters.}\label{fig:rfmr_n2_Rs}
\end{center}
\end{figure}

The sensitivity analysis of $R^*$, and its strict concavity w.r.t the rates, have important 
 implications to the problem of
 optimizing    the steady-state production rate in the~RFMR w.r.t the rates~$\lambda $. We now explain this using 
a specific   optimization problem. First note that to make the problem meaningful every rate must be bounded 
above. Otherwise, the optimal solution will be to take this  rate to infinity. We thus 
consider the following constrained optimization problem.
\begin{Problem}\label{prob:const_opt}
Consider a RFMR with dimension $n$. Given the parameters $w_1,\dots,w_n, b>0$, maximize $R^*=R^*(\LMDR)$ with respect to the parameters $\LMDR$, subject to the constraints
\begin{align}\label{eq:const_opt}
&\sum_{i=1}^n w_i \lambda_i\le b,  \\
& \LMDR > 0. \nonumber
\end{align}
\end{Problem}
In other words, the problem is to maximize $R^*$ w.r.t. the rates, under the constraints that the rates are positive and their weighted sum is bounded  by~$b$. The weights~$w_i$s can be used to provide different weighting to the different rates, and $b$ represents  the ``total biocellular  budget''.  By Prop.~\ref{prop:si01}, the optimal solution always satisfies the constraint in~\eqref{eq:const_opt} with equality. Note that a similar optimization problem was defined and analyzed in the context of the RFM in~\cite{rfm_max}.

In the context of mRNA translation, each $\lambda_i$ depends on
 the availability of   translation resources that affect codon decoding times, such as tRNA molecules, amino acids, elongation factors, and Aminoacyl tRNA synthetases. These resources are limited 
 as generating them consumes significant amounts of cellular energy. 
They are also correlated. For example, a large $\lambda_i$ may imply large consumption of 
certain~tRNA molecules by site $i$, depleting the availability of tRNA molecules to the other sites. 
Thus, the first (affine) constraint in~\eqref{eq:const_opt} describes the limited and shared translation resources, whereas
 $b$ describes the total available biocellular budget.

By Prop.~\ref{prop:R_concave}, the objective function in Problem~\ref{prob:const_opt} is strictly concave, and since the constraints are affine, Problem~\ref{prob:const_opt} is a \emph{convex optimization problem}~\cite{convex_boyd}. Thus, it admits a \emph{unique} solution.
 We denote the optimal solution of Problem~\ref{prob:const_opt} by  $\lambda^{co}:=(\lambda_1^{co},\dots,\lambda_n^{co})$, and the corresponding maximal (now in the sense of total density and transition rates) steady-state production rate by $R^{co}$ (where~$co$ denotes constrained 
optimization). This means that for a RFMR with dimension~$n$, $R^{co}$ is the maximal steady-state production rate over all the    rates satisfying the constraints~\eqref{eq:const_opt} and all possible total initial  densities.

The convexity also implies that the solution 
 can be determined efficiently using numerical algorithms that scale well with~$n$.
To demonstrate this, we wrote a simple and \emph{unoptimized} MATLAB program  
(that is guaranteed to converge because of the convexity) for solving this optimization problem and ran it on 
a   MAC laptop with a~$2.6$ GHz Intel core~i7 processor.
 As an example, for~$n=100$ and the (arbitrarily chosen) weights~$w_i = 1 + 0.4 \sin(2 \pi i / 100 )$, $i=1,\dots,100$, and~$b=1$, 
the optimal solution was found   
 after~$11.7$ seconds.

The affine constraint in~\eqref{eq:const_opt} includes a possibly different weight for each of the rates. For example, if~$w_2$ is much larger than the other weights then this means that any small increase in~$\lambda_2$ will greatly increase the total weighted sum, thus typically forcing the optimal value~$\lambda_2^{co}$ to be small.
In the special case where all the~$w_i$s are equal the formulation
 gives equal preference to all the rates, so if the corresponding optimal solution satisfies $\lambda_j^{co}>\lambda_i^{co}$, for some~$i,j$, then this implies that, in the context of maximizing~$R^*$, $\lambda_j$ is ``more important''
 than~$\lambda_i$. We refer to this case as the \emph{homogeneous constraint} case and assume, without loss of generality, that~$w_i=1$ for all~$i$. 
 Note that by~\eqref{eq:hom_deg1} we can always assume, without loss of generality, that $b=1$.

\begin{Proposition}\label{prop:const_opt_hom}
Consider Problem~\ref{prob:const_opt} with $w_1=\cdots=w_n=b=1$, i.e. the affine constraint is  
\be\label{eq:const_opt_hom}
\sum_{i=1}^n \lambda_i=1.
\ee
Then the optimal solution is $\lambda_i^{co} =1/n$ for all~$i$. The~RFMR with these rates satisfies~$s^*=n/2$, 
$ e_i^{co}=1/2 $ for all~$i$,  
and~$R^{co} =1/(4n)$. 
\end{Proposition}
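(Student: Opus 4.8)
The plan is to exploit the convexity established in Proposition~\ref{prop:R_concave} together with the symmetry of the homogeneous constraint, and then verify the candidate point via the sensitivity formula in Proposition~\ref{prop:sense} (or its specialization, Proposition~\ref{prop:sens_homog}). Since Problem~\ref{prob:const_opt} with $w_1=\cdots=w_n=b=1$ is a convex optimization problem with a strictly concave objective over a compact feasible set, it has a unique maximizer, and at an interior point this maximizer is characterized by the first-order (Lagrange) condition: there exists a multiplier $\mu$ such that $\phi_i(\lambda^{co})=\mu$ for all $i=1,\dots,n$. So it suffices to produce one feasible point at which all the sensitivities coincide, and by uniqueness it must be the optimal solution.

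The natural candidate is $\lambda_i^{co}=1/n$ for all $i$, i.e.\ the homogeneous RFMR scaled so that $\sum_i\lambda_i=1$. By Proposition~\ref{prop:sens_homog} (applied with common rate $\lambda_c=1/n$) all sensitivities at this point are equal to $1/(4n)$, so the first-order condition holds with $\mu=1/(4n)$. Hence this point is the unique solution $\lambda^{co}$ of Problem~\ref{prob:const_opt}. (Alternatively, one can invoke a symmetry argument: the constraint set and the objective $R^*$ are both invariant under permutations of the coordinates that correspond to cyclic shifts—Remark~\ref{remk:cir_rfmr}—but in fact $R^*$ is fully symmetric under all permutations of $\lambda_1,\dots,\lambda_n$, as can be seen from the definition of $A$ up to relabeling; combined with strict concavity, the unique maximizer must be the fixed point of these symmetries, which is $\lambda_i=1/n$. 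Using the explicit sensitivity computation is cleaner and avoids having to argue full permutation symmetry, so I would present that.)

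It then remains to read off the steady-state quantities from Theorem~\ref{thm:rfmr_lin}. With homogeneous rates $\lambda_c=1/n$, Example~\ref{exa:homg} already records that $A(\lambda_c,\dots,\lambda_c)$ has Perron eigenvalue $\sigma=2\lambda_c^{-1/2}=2\sqrt{n}$ and Perron eigenvector $\zeta=\begin{bmatrix}1&\cdots&1\end{bmatrix}'$. Plugging into~\eqref{eq:spect_rep}: $R^{co}=\sigma^{-2}=1/(4n)$; $e_i^{co}=\lambda_c^{-1/2}\sigma^{-1}\zeta_{i+1}/\zeta_i=\sqrt{n}\cdot\frac{1}{2\sqrt{n}}\cdot 1=1/2$ for all $i$; and $s^*=\sigma^{-1}\sum_{i=1}^n\lambda_c^{-1/2}\zeta_{i+1}/\zeta_i=\frac{1}{2\sqrt{n}}\cdot n\sqrt{n}=n/2$. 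This gives all the asserted values. One can double-check $R^{co}=1/(4n)$ directly against Example~\ref{exa:homg}, which states $R^*=\lambda_c/4$; with $\lambda_c=1/n$ this is $1/(4n)$, consistent.

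The only step requiring a little care is the justification that the unique maximizer is interior (so that the Lagrange condition with all $\phi_i$ equal is both necessary and sufficient) and that it is in fact attained; I expect this to be the main, though minor, obstacle. Attainment follows because $R^*$ is continuous on $\R^n_{++}$ and extends continuously to the boundary of the simplex $\{\lambda\ge 0:\sum\lambda_i=1\}$ with value $0$ there (a rate going to $0$ drives $R^*\to 0$, e.g.\ since $R^*\le\min_i\lambda_i$-type bounds hold, or directly from the monotone dependence implicit in Proposition~\ref{prop:si01}), while $R^*>0$ in the interior; hence the max is attained at an interior point. Strict concavity from Proposition~\ref{prop:R_concave} then gives uniqueness, and interiority makes the stationarity condition $\phi_i\equiv\mu$ exact, closing the argument.
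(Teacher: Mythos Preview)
Your main argument via the KKT condition and Proposition~\ref{prop:sens_homog} is correct and complete: strict concavity gives a unique interior maximizer, stationarity forces all $\phi_i$ equal, and at $\lambda_i=1/n$ the sensitivities are all $1/(4n)$, so this is the optimum; the steady-state values then follow from Example~\ref{exa:homg}. This is essentially the approach the paper records in the Remark immediately following the proposition.

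The paper's own proof, however, takes a shorter and more direct route. It uses only the cyclic-shift invariance (Remark~\ref{remk:cir_rfmr}) together with uniqueness: if $\tilde\lambda$ is optimal then its one-step cyclic shift $\bar\lambda$ is also feasible and yields the same $R^*$ (since $A(\bar\lambda)$ and $A(\tilde\lambda)$ are similar), so by uniqueness $\bar\lambda=\tilde\lambda$, which forces all entries equal. This avoids invoking KKT, the sensitivity formula, and the interiority discussion altogether. Your approach has the advantage of making the Lagrange multiplier explicit and of generalizing more readily to non-homogeneous weights, while the paper's symmetry argument is shorter but specific to the homogeneous case.

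One small correction to your parenthetical: $R^*$ is \emph{not} invariant under arbitrary permutations of $(\lambda_1,\dots,\lambda_n)$, only under cyclic shifts (and reversal). A general permutation destroys the periodic Jacobi structure of $A$ and can change its Perron root. Fortunately you do not rely on this claim, and cyclic-shift invariance alone already suffices for the symmetry argument (a one-step shift fixing the optimum forces all coordinates equal).
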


\begin{Remark}
In view of the Kuhn--Tucker theorem~\cite{convex_boyd}, the necessary and sufficient condition for
optimality of
$\lambda$ in Problem~\ref{prob:const_opt} with homogeneous weights is that the sensitivity $\phi_i=\frac{\partial R^*}{\partial \lambda_i}(\lambda^*)$
does not depend on the index $i$.
\end{Remark}

\section{Discussion}\label{sec:disscus}

We considered a deterministic model for translation along a circular~mRNA.
The behavior of this model depends on the transition rates between the sites 
and on the value~$s:=\sum_{i=1}^n x_i(0)$, that is, the initial total density along the 
ring. The total density is conserved, so~$\sum_{i=1}^n x_i(t)=s$ for all~$t\geq 0$.

We  derived a spectral representation for the
steady-state density and production rate for the case  where the initial density is~$s^*$, i.e. the density yielding a maximal 
steady-state production rate.  
  In fact, the proof of  Thm.~\ref{thm:rfmr_lin} (see the Appendix)  shows that we can
interpret the optimal density~RFMR as a dynamical system that ``finds'' the Perron eigenvalue and eigenvector of a
certain periodic Jacobi matrix.

The spectral representation for the~RFMR provides
a powerful    framework for analyzing the RFMR when initialized with the optimal total density $s^*$. In addition to providing an efficient and numerically stable
manner for computing
 the optimal steady-state production rate and steady-state density, it allows to  efficiently compute
 the sensitivity of the optimal steady-state production rate to perturbations in the rates. 
This is important as conditions in the cell are inherently
 stochastic, and thus sensitivity analysis must accompany the steady-state
   description. 
	
	Furthermore, using the spectral representation, it was shown that the steady-state production rate with optimal density is a strictly concave function of the RFMR rates. The translation machinery in the cell is affected by different kinds of mutations (e.g. synonymous codon mutations, duplication of a tRNA gene, etc.). The strict concavity result thus suggest that the selection of mutations that increase fitness indeed converges towards the unique optimal parameter values (by a simple ``hill-climbing'' evolution process).  The strict concavity implies that
	 given an affine (and more generally convex) constraint on the rates, that represents  limited and shared translation resources, the unique optimal set of rates can be determined efficiently even for (circular) mRNAs with a  large number of codons.

Obtaining an optimal production rate is an important problem in synthetic biology and biotechnology. Examples include optimal synonymous codon mutations of an endogenous gene, and optimal translation efficiency and protein levels of heterologous genes in a new host~\cite{Plotkin2011,Tuller2010,Gustafsson2004,Kimchi-Sarfaty2013}. These genes compete with endogenous genes for the available  translation resources, as consuming too much resources by the heterologous gene may kill the host~\cite{Plotkin2011,Tuller2010}. Thus, any realistic optimization of the protein production rate should not consume too many resources, as otherwise the fitness of the host may be significantly reduced. These considerations seems to fit well with the affine-constrained optimization problem presented and analyzed here. 

We also showed that the spectral representation of  the RFM follows as a special case of the representation given here
for the~RFMR.   However, it seems that a better understanding of the link between the RFM and the RFMR requires further study.  
Our results suggest several other interesting directions for future research. One such direction is finding  special cases,
besides the one described in Example~\ref{exa:homg}, where the Perron eigenvalue and eigenvector
of~$A(\lambda_1,\dots,\lambda_n)$ are explicitly known. Another possible direction is the analysis of the dual of the optimization problem defined by Problem~\ref{prob:const_opt}. Specifically, does the dual problem has any interesting biological interpretation in the context of mRNA translation, and does its analysis provides more insight into optimizing translation?

Finally,  TASEP with periodic boundary conditions has been  used to model many transport phenomena
including   traffic flow and pedestrian dynamics~\cite{TASEP_book, waldau2007pedestrian}. We believe that the spectral representation of the~RFMR with optimal density may be useful also for analyzing these transport applications. 

\section*{Acknowledgments}
The research of~YZ is partially supported by the Edmond J. Safra Center for Bioinformatics at Tel Aviv University.
 The research of~AO is partially supported by the Russian Foundation for Basic Research, grant 17-08-00742. 
The research of~MM is partially supported by research grants from the Israeli Ministry of Science, Technology \& Space, the US-Israel  Binational Science Foundation, and  the  Israeli Science Foundation.

\section*{Author Contributions Statement}
YZ, AO, and MM performed the research  and wrote the paper.

\section*{Data Availability Statement} All the relevant data is included in the manuscript. 

\section*{Competing Financial Interests Statement} The authors declare no competing financial interests.

\section*{Appendix - Proofs}\label{sec:proofs}

\begin{proof} [\bf Proof of Thm.~\ref{thm:rfmr_lin}]
Pick~$n>2$ and parameters~$c_1,\dots,c_{n-1}> 0$, and~$c_n \geq 0$.
Consider  the~$n\times n$ periodic Jacobi matrix:
\[
J:=\begin{bmatrix}
0& c_1 & 0 &0& \dots& 0 & 0&0  &c_n \\
c_1& 0 & c_2 &0& \dots& 0 & 0&0&0 \\
0& c_2 &0&c_3 & \dots& 0 & 0 &0&0\\
&&\vdots\\
0 & 0  &0& 0  & \dots& 0 & c_{n-2} &0&c_{n-1}\\
c_n & 0  &0& 0  & \dots& 0&0&c_{n-1}&0
\end{bmatrix}.
\]
  Note that~$J$ is irreducible and (componentwise) non-negative.
Let~$\sigma>0$ denote that Perron eigenvalue of~$J$ and let~$\zeta\in\R^n_{++}$ denote the corresponding eigenvector.
 The equation~$J\zeta=\sigma \zeta$ yields
\begin{align}\label{eq:eig_sys}
c_1 \zeta_2+c_n \zeta_n&=\sigma \zeta_1,  \nonumber\\
c_1 \zeta_1+c_2 \zeta_3&=\sigma \zeta_2,  \nonumber \\
c_2 \zeta_2+c_3 \zeta_4&=\sigma \zeta_3,  \nonumber\\
&\vdots \nonumber \\
c_{n-2} \zeta_{n-2}+c_{n-1}\zeta_n&=\sigma \zeta_{n-1}, \nonumber \\
c_n\zeta_1+c_{n-1} \zeta_{n-1}&=\sigma \zeta_n .
\end{align}
  Define
\be\label{eq:assump11} d_i:= \frac{c_i \zeta_{i+1}}{\sigma \zeta_i},\quad i=1,\dots,n. \ee Note that since the indexes
are interpreted modulo $n$, Eq.~\eqref{eq:assump11} implies in particular that \be\label{eq:assump1} d_n=   \frac{c_n
\zeta_{1}}{\sigma \zeta_n}. \ee

Then~\eqref{eq:eig_sys} yields:
\begin{align}\label{eq:port}
 \sigma^{-2}&=c_n^{-2} d_n(1-d_1),  \nonumber \\
 \sigma^{-2}&=c_1^{-2} d_1(1-d_2), \nonumber \\
 \sigma^{-2}&=c_2^{-2} d_2(1-d_3),  \nonumber \\
&\vdots   \\
\sigma^{-2}&=c_{n-2}^{-2} d_{n-2}(1-d_{n-1}),  \nonumber \\
 \sigma^{-2}&=c_{n-1}^{-2} d_{n-1}(1-d_n).\nonumber
\end{align}
Also, it follows from~\eqref{eq:assump11} that~$\prod_{i=1}^n d_i= \sigma^{-n}\prod_{i=1}^n c_i$, and
from~\eqref{eq:port} that~$\prod_{i=1}^n (1-d_i)= \sigma^{-2n}\frac{\prod_{i=1}^n c_i^2}{\prod_{i=1}^n d_i}$, and
combining these two equations yields \be\label{eq:podf}
 \prod_{i=1}^n d_i = \prod_{i=1}^n (1-d_i) .
\ee Note that all the derivations above hold for any real eigenvalue of~$J$ and its corresponding eigenvector (assuming
all its entries are non zero so that~\eqref{eq:assump11} is well-defined), but since the Perron eigenvector is the only
eigenvector in the first orthant~\cite{matrx_ana}, all the~$d_i$s are positive only for the Perron eigenvalue and eigenvector.

Now consider  a RFMR with dimension $n$ and rates $\lambda_i:=c_i^{-2}$, $i=1,\dots,n$, that is:
\begin{align}\label{eq:cnrfmr}
                        \dot x_1&=c_n ^{-2} x_n (1-x_1)-c_1^{-2} x_1(1-x_2) \nonumber \\
                        \dot x_2&=  c_1^{-2} x_1(1-x_2) -c_2^{-2} x_2(1-x_3) \nonumber  \\
                        &\vdots\\
              \dot x_{n-1}&=c_{n-2}^{-2} x_{n-2}(1-x_{n-1}) -c_{n-1}^{-2} x_{n-1}(1-x_n)  \nonumber \\
              \dot x_{n}&=c_{n-1}^{-2} x_{n-1}(1-x_{n}) -c_{n}^{-2} x_{n}(1-x_1)  \nonumber .
\end{align}
We already know that this system converges to a steady-state~$e\in C^n$, that is,
\begin{align*}
R=c_n ^{-2} e_n (1-e_1)=c_1^{-2} e_1(1-e_2) =\dots=c_{n-1}^{-2} e_{n-1}(1-e_{n}).
\end{align*}
Comparing this with~\eqref{eq:port} shows that~$e_i=d_i$ for all~$i$, and that  the steady-state production rate
is~$R=\sigma^{-2}$. Furthermore,~\eqref{eq:podf} implies that~$\prod_{i=1}^n e_i=\prod_{i=1}^n (1-e_i)$, so we conclude
that the steady-state satisfies condition~\eqref{eq:swee(1-e)} that describes the unique  optimal steady-state (i.e. the steady-state production rate that corresponds to the unique optimal total density $s^*$). This
proves the first two equations in~\eqref{eq:spect_rep}. Finally, since the total density is conserved, it is equal
to~$\sum_{i=1}^n e_i$. 
This completes the proof of Thm.~\ref{thm:rfmr_lin}.
\end{proof}

\begin{proof} [\bf Proof of Proposition~\ref{prop:sense}]
By Thm.~\ref{thm:rfmr_lin}, \be\label{eq:potr} \phi_i=\frac{\partial  }{\partial  \lambda_i}\sigma^{-2} =-2
\sigma^{-3}\frac{\partial \sigma }{\partial  \lambda_i} . \ee By known results from linear algebra (see,
e.g.,~\cite{magnus85}), the sensitivity of the Perron root of~$A$ with respect to a change in~$\lambda_i$ is
\[
\frac{\partial}{\partial \lambda_i}\sigma=\frac{\zeta' \left( \frac{d}{d \lambda_i}A \right ) \zeta}{\zeta' \zeta} .
\]
Only the entries~$a_{i,i+1}=a_{i+1,i}= \lambda_i^{-1/2}$
  depend on~$\lambda_i$, so
\[
\frac{\partial}{\partial \lambda_i}\sigma=\frac{ -\zeta_i\zeta_{i+1} \lambda_i^{-3/2}}{ \zeta' \zeta},
\]
 and combining this with~\eqref{eq:potr}
 proves~\eqref{eq:R_sens}.
\end{proof}


\begin{proof}[\bf Proof of Prop.~\ref{prop:si01}]
Since $\sigma>0$ and $\zeta\in\R^n_{++}$,   $\phi_i>0$ for all~$i$. To prove the upper bound, perturb~$\lambda_i$
to~$\bar \lambda_i:=\lambda_i+\varepsilon$, with~$\varepsilon>0$ sufficiently small. This yields a perturbed matrix $\bar
A$ that is identical to $A$ except for entries $(i,i+1)$ and $(i+1,i)$ that are
\[
\bar \lambda_i^{-1/2}=(\lambda_i + \varepsilon)^{-1/2}=\lambda_i^{-1/2} -\frac{\varepsilon \lambda_i^{-3/2}}{2} +
o(\varepsilon),
\]
where~$o(\varepsilon)$ denotes a function~$f(\varepsilon)$ satisfying~$\lim_{\varepsilon\to 0}
\frac{f(\varepsilon)}{\varepsilon}=0$. This means that $\bar A  = A + P$, where $P\in\R^{n\times n}$ is a matrix with
zero entries except for entries $(i,i+1)$ and $(i+1,i)$ that are equal to~$-\frac{\varepsilon \lambda_i^{-3/2}}{2} +
o(\varepsilon)$. By Weyl's inequality~\cite{matrx_ana}, $\rho(\bar A) \ge \rho(A) - \frac{\varepsilon \lambda_i^{-3/2}}{2
}+o(\varepsilon)$, where $\rho(Q)$ denotes the maximal eigenvalue of a  symmetric matrix $Q$. This means
that~$\frac{\partial \rho(A)}{\partial \lambda_i} \ge  - \frac{\lambda_i^{-3/2}}{2}+\frac{o(\varepsilon)}{\varepsilon}$,
thus $\phi_i \le (R^*/\lambda_i)^{3/2}$. Since $R^* \le \lambda_i$, it follows that $\phi_i \le 1$ for all~$i$.
\end{proof}

\begin{proof}[\bf Proof of Prop.~\ref{prop:sens_homog}]
Consider a RFMR with homogeneous rates~\eqref{eq:homog_rates}. Then by Example~\ref{exa:homg}, $\zeta_i=1$, $i=1,\dots,n$, and $\sigma=2\lambda_c^{-1/2}$, and plugging these in~\eqref{eq:R_sens} completes the proof.

\end{proof}

\begin{proof}[\bf Proof of Prop.~\ref{prop:sym_sens}]
We require the following result.

\begin{Proposition}\label{prop:sym_zeta}
Consider the RFMR with dimension~$n$ and symmetric rates. Then~$\zeta_i=\zeta_{n+1-i}$, $i=1,\dots,n$.
\end{Proposition}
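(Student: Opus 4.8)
The plan is to read off the desired symmetry of $\zeta$ from a symmetry of the matrix $A$ that is forced by the symmetry of the rates. Let $Q\in\R^{n\times n}$ be the reversal (exchange) permutation matrix, i.e.\ $Q_{ij}=1$ if $j=n+1-i$ and $Q_{ij}=0$ otherwise; applied to a vector, $Q$ reverses the order of its coordinates, and $Q=Q'=Q^{-1}$. The heart of the argument is the claim that the symmetry hypothesis $\lambda_i=\lambda_{n-i}$, $i=1,\dots,n-1$ (with all indices taken modulo $n$, so in particular $\lambda_0=\lambda_n$), implies $QAQ=A$, equivalently that $A$ commutes with $Q$.

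To verify $QAQ=A$ I would compare $(QAQ)_{ij}=A_{\,n+1-i,\,n+1-j}$ with $A_{ij}$ on the only entries of the periodic Jacobi matrix $A$ in \eqref{eq:matA} that can be nonzero, namely $A_{k,k+1}=A_{k+1,k}=\lambda_k^{-1/2}$ for $k=1,\dots,n$ (indices modulo $n$; the pair $k=n$ gives the corner entries $A_{n,1}=A_{1,n}=\lambda_n^{-1/2}$). Taking $i=k$, $j=k+1$ gives $(QAQ)_{k,k+1}=A_{\,n+1-k,\,n-k}=A_{\,n-k,\,n+1-k}=\lambda_{n-k}^{-1/2}$, where I used the symmetry of $A$ together with the band rule $A_{m,m+1}=\lambda_m^{-1/2}$ at $m=n-k$; by the hypothesis $\lambda_{n-k}=\lambda_k$ this equals $\lambda_k^{-1/2}=A_{k,k+1}$. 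The sub-diagonal entries follow by symmetry of both $A$ and $QAQ$, and all remaining entries vanish on both sides because the reversal carries the cyclic band structure of $A$ onto itself. Hence $QAQ=A$, i.e.\ $AQ=QA$. This index bookkeeping modulo $n$ — making sure the reflection sends each reflected entry to the correct $\lambda$ and that the corner entries behave — is the only step requiring care, and it is routine.

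Finally, I would transfer the symmetry to the Perron eigenvector. From $A\zeta=\sigma\zeta$ and $AQ=QA$ we get $A(Q\zeta)=Q(A\zeta)=\sigma(Q\zeta)$, so $Q\zeta$ is again an eigenvector of $A$ for the eigenvalue $\sigma$; and since $\zeta\in\R^n_{++}$ while $Q$ merely permutes coordinates, $Q\zeta\in\R^n_{++}$. Because $\sigma$ is the Perron root, it is simple and its eigenvector is, up to a positive scalar, the only eigenvector of $A$ lying in the positive orthant; therefore $Q\zeta=c\zeta$ for some $c>0$, and applying any vector norm (which $Q$ preserves) forces $c=1$. Thus $Q\zeta=\zeta$, which upon reading off coordinates says exactly $\zeta_i=\zeta_{n+1-i}$ for $i=1,\dots,n$. (This is then combined with the sensitivity formula \eqref{eq:R_sens} to deduce $\phi_i=\phi_{n-i}$ in the proof of Proposition~\ref{prop:sym_sens}.)
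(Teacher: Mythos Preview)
Your proof is correct and takes a genuinely different, more direct route than the paper. You use the full $n\times n$ exchange matrix $Q$ and verify in one stroke that $QAQ=A$; the conclusion $Q\zeta=\zeta$ then drops out immediately from the simplicity of the Perron eigenvalue and the fact that $Q$ preserves the positive orthant and norms. The paper instead splits into the cases $n$ even and $n$ odd, writes $A$ in $2\times 2$ block form using an $(n/2)\times(n/2)$ reversal matrix, establishes a strict spectral-radius inequality $\rho(A_1)<\rho(A)$ by appealing to Proposition~\ref{prop:si01}, and finally extracts from the block eigenvector equations a homogeneous system $(A_1-A_2Q-\sigma I)(\zeta^1-Q\zeta^2)=0$ whose kernel must be trivial. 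Your argument is shorter, avoids the parity case split, and does not rely on any earlier proposition; the paper's approach effectively rediscovers the commutation $QA=AQ$ through a block computation, at the cost of an extra spectral argument to show the relevant linear map is invertible.
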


\begin{proof}[\bf Proof of Prop.~\ref{prop:sym_zeta}]
Consider first the case~$n$ even. Let $Q\in\R^{(n/2)\times (n/2)}$ be a reversal matrix, i.e. a matrix of zeros 
except for the counter-diagonal (i.e.   entries $(i,\frac{n}{2}-i+1)$, $i=1,\dots,n/2$) that is all ones. For example, for $n=4$, 
\[
Q=\begin{bmatrix} 0 & 1 \\ 1 & 0 \end{bmatrix}.
\]
Note that given any arbitrary vector $v=\begin{bmatrix} v_1 & v_2 &\cdots &v_{n/2} \end{bmatrix}'\in\R^{n/2}$, $Qv=\begin{bmatrix} v_{n/2} & v_{(n/2)-1} & \cdots & v_1 \end{bmatrix}'$.

Since the rates satisfy~\eqref{eq:sym_rates}, the~$n\times n $ matrix $A$ has the form
\[
A=\begin{bmatrix} A_1 & A_2 \\ Q A_2 Q & Q A_1 Q \end{bmatrix},
\]
where $A_1\in\R^{(n/2)\times (n/2)}_+$ is a matrix of zeros except for
 the super-diagonal and the sub-diagonal, which are both equal to $(\lambda_1^{-1/2},\dots,\lambda_{(n/2)-1}^{-1/2})$, and $A_2\in\R^{(n/2)\times(n/2)}_+$ is a matrix of zeros except for
   entry $(1,n/2)$ that is     $\lambda_n^{-1/2}$, and   entry $(n/2,1)$ that is     $\lambda_{n/2}^{-1/2}$. Decompose the Perron eigenvector~$\zeta$ of~$A$
as~$\zeta^1:=\begin{bmatrix} \zeta_1 & \dots & \zeta_{n/2} \end{bmatrix}'$ and $\zeta^2:=\begin{bmatrix} \zeta_{(n/2)+1} & \dots & \zeta_{n} \end{bmatrix}'$. 

Let~$\rho(W)$ denote the spectral radius of a matrix~$W$. 
Since~$A_1$ is a principal submatrix
 of the (componentwise) nonnegative matrix~$A$, $\rho(A_1)\leq \rho(A)$ (see~\cite[Ch.~8]{matrx_ana}). 
Assume for the moment that~$\rho(A_1) = \rho(A)$. 
Then using the fact that~$QQ=I$, that is~$Q=Q^{-1}$, we conclude
 that~$\rho\left(  \begin{bmatrix} A_1 & 0 \\ 0 & Q A_1 Q \end{bmatrix} \right)=\rho(A)$.
This means that the matrices
$\begin{bmatrix} A_1 & 0 \\ 0 & Q A_1 Q \end{bmatrix}$ and~$ \begin{bmatrix} A_1 & A_2 \\ Q A_2 Q & Q A_1 Q \end{bmatrix}$
have the same Perron root, but this     contradicts Prop.~\ref{prop:si01}. We conclude that
\be\label{eq:smlll}
							\rho(A_1)<\rho(A)=\sigma. 
\ee

 The equation $A \zeta=\sigma \zeta$ yields
\begin{align*}
A_1 \zeta^1 + A_2 \zeta^2 &= \sigma \zeta^1, \\
Q A_2 Q \zeta^1 + Q A_1 Q \zeta^2 &= \sigma \zeta^2.
\end{align*}
Multiplying both sides of the second equation by $Q$, noting that $QQ=I$,
 and rearranging yield
\begin{align}\label{eq:z1z2}
A_1 \zeta^1 + A_2 \zeta^2 &= \sigma \zeta^1, \nonumber \\
A_1 Q \zeta^2 + A_2 Q \zeta^1 &= \sigma Q \zeta^2.
\end{align}
Subtracting  the second equation from the first and using again the fact that~$QQ=I$ yields
\be\label{eq:poyte}
							(A_1-A_2Q-\sigma I)(\zeta^1-Q\zeta^2)=0.
\ee
Combining this with~\eqref{eq:smlll} and the fact that~$A_2Q$ is (componentwise) nonnegative implies that~$\zeta^1 = Q\zeta^2$,  i.e. $\zeta_i=\zeta_{n+1-i}$, $i=1,\dots,n$. This completes the proof for the case~$n$ even. The proof when~$n$ is odd is very similar and therefore omitted. 
\end{proof}

Now the proof of Prop.~\ref{prop:sym_sens}   follows from combining~\eqref{eq:R_sens}, Thm.~\ref{thm:rfmr_lin}, and~Prop.~\ref{prop:sym_zeta}.
\end{proof}

\begin{proof}[\bf Proof of Prop.~\ref{prop:R_concave}]
Indeed, the  map $\lambda=(\lambda_1,\dots,\lambda_{n})\mapsto
 A(\lambda)$, where $A(\lambda)$ is given in~\eqref{eq:matA}, from $\R^n_{++}$ is convex, meaning that
the matrix inequality
\begin{equation}\label{conv}
    \frac12\left(A(\lambda')+A(\lambda'')\right)\geq A(\frac12(\lambda'+\lambda''))
\end{equation} holds elementwise for any arbitrary $\lambda',\lambda''\in\R^n_{++}$. This immediately follows from the convexity of the real function
$\lambda\mapsto\lambda^{-1/2}$. The Perron--Frobenius theorem implies the corresponding inequality for the
Perron eigenvalue~\cite{matrx_ana}
\begin{equation}\label{conv_rho}
    \frac12\left(\sigma(A(\lambda'))+\sigma(A(\lambda''))\right)\geq \sigma(A(\frac12(\lambda'+\lambda''))),
\end{equation}
where the inequality~\eqref{conv_rho} is strict if $\lambda'\neq\lambda''$. Thus $\sigma(\lambda)$ is a strictly convex function. In view of the basic identity
$R^*=\sigma^{-2}$ in~\eqref{eq:spect_rep}, it follows that $R^*(\lambda)$ is a strictly concave function.
\end{proof}

\begin{proof}[\bf Proof of Prop.~\ref{prop:const_opt_hom}]
We know that Problem~\ref{prob:const_opt} admits a unique optimal solution~$\tilde \lambda$. Consider the cyclic shift
$\bar \lambda_{i}=\tilde \lambda_{i+1}$, $i=1,\dots,n$, where the indices are taken modulo $n$. Note that $\sum_{i=1}^n \bar \lambda_i=\sum_{i=1}^n \tilde \lambda_i =1$, so~$\bar \lambda$ also satisfies the constraint~\eqref{eq:const_opt_hom}. The
matrices $A(\bar \lambda)$ and $A(\tilde \lambda)$ have the same spectrum.  Since the optimal solution is unique, $\overline\lambda=\tilde \lambda$.
We conclude that the optimal transition rates~$\tilde \lambda_i$  are all equal, and thus~$\lambda_i^{co}:=1/n$,  $i=1,\dots,n$. 
By Example~\ref{exa:homg}, $R^{co}=1/(4n)$, and $e^{co}_i=1/2$, $i=1,\dots,n$.
\end{proof}

\end{document}